\numberwithin{equation}{section}
\newcommand{\comm}[1]{{#1}}
\newcommand\RR{{\mathbb R}}
\newcommand\TT{{\mathbb T}}
\newcommand\ZZ{{\mathbb Z}}
\newcommand\CC{{\mathbb C}}
\newcommand{\mc}[1]{{\mathcal #1}}
\newcommand{\mf}[1]{{\mathfrak #1}}
\newcommand{\bb}[1]{{\mathbb #1}}
\begin{document}
\title{Small perturbation of a disordered harmonic chain by a noise and an anharmonic potential}
\author{
C\'edric Bernardin\footnote{
Universit\'e de Lyon and CNRS, UMPA, UMR-CNRS 5669, ENS-Lyon,
46, all\'ee d'Italie, 69364 Lyon Cedex 07 - France.
E-mail: Cedric.Bernardin@umpa.ens-lyon.fr
} ,
Fran\c cois Huveneers\footnote{
CEREMADE, 
Universit\' e de Paris-Dauphine,
Place du Mar\'echal De Lattre De Tassigny,
75775 PARIS CEDEX 16 - FRANCE.
E-mail: huveneers@ceremade.dauphine.fr.
}
}
\maketitle 
\begin{abstract}
We study the thermal properties of a pinned disordered harmonic chain weakly perturbed by a noise and an anharmonic potential. 
The noise is controlled by a parameter $\lambda \rightarrow 0$, 
and the anharmonicity by a parameter $\lambda' \le \lambda$. 
Let $\kappa$ be the conductivity of the chain, defined through the Green-Kubo formula.
Under suitable hypotheses, we show that $\kappa = \mathcal O (\lambda)$
and, in the absence of anharmonic potential, that $\kappa \sim \lambda$. 
This is in sharp contrast with the ordered chain for which $\kappa \sim 1/\lambda$, 
and so shows the persistence of localization effects for a non-integrable dynamics.
\end{abstract}

\vspace*{9cm}

\pagebreak

\section{Introduction}
The mathematically rigorous derivation of macroscopic thermal properties of solids, 
starting from their microscopic description, 
is a serious challenge (\cite{bon}, \cite{lep}).
On the one hand, numerous experiments and numerical simulations show that, for a wide variety of materials, 
the heat flux is related to the gradient of temperature through a simple relation known as Fourier's law:
\begin{equation*}
J \; = \; - \kappa (T) \, \nabla T \, ,
\end{equation*}
where $\kappa (T)$ is the thermal conductivity of the solid. 
On the other hand, the mathematical understanding of this phenomenological law from the point of view of statistical mechanics is still lacking. 

A one-dimensional solid can be modelled by a chain of oscillators, 
each of them being possibly pinned by an external potential,
and interacting through a nearest neighbour coupling.
The case of homogeneous harmonic interactions can be readily analysed,
but it has been realized that this very idealized solid behaves like a perfect conductor, and so violates Fourier's law (\cite{rie}).
{\comm{To take into account the physical observations}}, it is thus needed to consider more elaborate models, where ballistic transport of energy is broken. 
Here are two possible directions. 

On the one hand, adding some anharmonic interactions can drastically affect the conductivity of the chain (\cite{aok}, \cite{luk}). 
Unfortunately, the rigorous study of anharmonic chains is in general out of reach, and even numerical simulations do not lead to completely unambiguous conclusions. 
In order to draw some clear picture, 
anharmonic interactions are mimicked in \cite{bas}\cite{ber2} by a stochastic noise that preserves total energy and possibly total momentum.
The thermal behaviour of anharmonic solids is, at a qualitative level, correctly reproduced by this partially stochastic model. 
By instance, the conductivity of the one-dimensional chain is shown to be positive and finite if the chain is pinned, and to diverge if momentum is conserved. 

On the other hand, {\comm{another element that can affect the conductivity of an harmonic chain is impurities}}. 
In \cite{rub} and \cite{cas}, an impure solid is modelled by a disordered harmonic chain, 
where the masses of the atoms are random. 
In these models, localization of eigenmodes induces a dramatic fall off of the conductivity. 
In the presence of everywhere onsite pinning, it is known that the chain behaves like a perfect insulator (see Remark 1 after Theorem \ref{theo: anharmonic}).
The case of unpinned chain is more delicate, and turns out to depend on the boundary conditions (\cite{dha3}).
The principal cases have been rigorously analysed in \cite{ver} and \cite{aja}. 

The thermal conductivity of an harmonic chain perturbed by both disorder and anharmonic interactions is a topic of both practical and mathematical interest. 
We will in the sequel only consider a one-dimensional disordered chain with everywhere on-site pinning. 
Doing so we avoid the pathological behaviour of unpinned one-dimensional chains, 
and we focus on a case where the distinction between ordered and disordered harmonic chain is the sharpest. 
We will consider the joint action of a noise and an anharmonic potential ; 
we call $\lambda$ the parameter controlling the noise, and $\lambda'$ the parameter controlling the anharmonicity (see Subsection \ref{subsec: model} below).

The disordered harmonic chain is an integrable system where localization of the eigenmodes can be studied rigorously (\cite{kun}). 
However, if some anharmonic potential is added, very few {\comm{is}} known about the persistence of localization effects. 
In \cite{dha2}, it is shown through numerical simulations that an even small amount of anharmonicity leads to a normal conductivity, 
destroying thus the localization of energy. 
In \cite{oga}, an analogous situation is studied and similar conclusions are reached. 
This is confirmed rigorously in \cite{ber}, if the anharmonic interactions are replaced by a stochastic noise preserving energy.
Nothing however is said there about the conductivity as $\lambda \rightarrow 0$. 
Later, this partially stochastic system has been studied in \cite{dha}, where numerical simulations indicate that $\kappa \sim \lambda$ as $\lambda \rightarrow 0$. 

Let us mention that, although the literature on the destruction of localized states seems relatively sparse in the context of thermal transport, 
much more is to find in that of Anderson's localization and disordered quantum systems
(see \cite{bask} and references in \cite{bask}\cite{dha}). 
There as well however, few analytical results seem to be available. 
Moreover, the interpretation of results from these fields to the thermal conductivity of solids is delicate, in part because many studies deal with systems at zero temperature: 
the time evolution of an initially localized wave packet. 

The main goal of this article is to establish that disorder strongly influences the thermal conductivity of a harmonic chain, 
when both a small noise and small anharmonic interactions are added. 
We will always assume that $\lambda' \le \lambda$, meaning that the noise is the dominant perturbative effect. 
Our main results, 
stated in Theorems \ref{theo: anharmonic} and \ref{theo: harmonic} below, 
are that $\kappa = \mathcal O (\lambda)$ as $\lambda \rightarrow 0$, and that $\kappa \sim \lambda$ if $\lambda' = 0$.
Strictly speaking, our results do not imply anything about the case where $\lambda' >0$ and $\lambda = 0$. 
However, in the regime we are dealing with, the noise is expected to produce interactions between localized modes, and so to increase the conductivity.
We thus conjecture that $\kappa = \mathcal O (\lambda')$ in this later case. 
This is in agreement with numerical results in \cite{oga}, where it is suggested that $\kappa$ could even decay as $\ed^{-c/\lambda'}$ for some $c >0$.

In the next {\comm{section}}, we define the model studied in this paper, we state our results and we give some heuristic indications. 
The rest of the paper is then devoted to the proof of Theorems \ref{theo: anharmonic} and \ref{theo: harmonic}. 
Let us already indicate its main steps. 
The principal computation of this article consists in showing that the current due to harmonic interactions between particles $k$ and $k+1$, called $j_{k,har}$, 
can be written as $j_{k,har} = -A_{har} u_{k}$, 
where $u_{k}$ is localized near $k$, and where $A_{har}$ is the generator of the harmonic dynamics. 
This is stated precisely and shown in Section \ref{sec: Poisson equation} ; the proof ultimately rests on localization results first established by Kunz and Souillard (see \cite{kun} or \cite{dam}). 
Once this is seen, general inequalities on Markov processes
allow us to obtain, in Section \ref{sec: upper bound}, the desired upper bound $\kappa = \mathcal O (\lambda)$ in presence of both a noise and non-linear forces.  
The lower bound $\kappa \ge c \lambda$, valid when $\lambda' = 0$, is established by means of a variational formula (see \cite{set}), using a method developed by the first author in \cite{ber}. 
This is carried out in Section \ref{sec: lower bound}.


\section{Model and results}\label{sec: model and results}

\subsection{Model}\label{subsec: model}
We consider a one-dimensional chain of $N$ oscillators, 
so that a state of the system is characterized by a point 
\begin{equation*}
x 
\; = \; 
(q,p) 
\; = \;
(q_1, \dots , q_N,p_1, \dots , p_N) \in \R^{2N}, 
\end{equation*} 
where $q_k$ represents the position of particle $k$, and $p_k$ its momentum. 
The dynamics is made of a hamiltonian part perturbed by a stochastic noise. 

\paragraph{The Hamiltonian.}
The Hamiltonian writes
\begin{align*}
H (q,p) 
\; = &\; 
H_{har}(q,p) + \lambda' H_{anh}(q,p) \\ 
= &\; 
\frac{1}{2}\sum_{k=1}^N \Big( p_k^2 + \nu_k \, q_k^2 + (q_{k+1} - q_k)^2 \Big)
+ \lambda' \sum_{k=1}^N \Big( U (q_k) + V(q_{k+1} - q_{k}) \Big),
\end{align*}
with the following definitions.
\begin{itemize}
\item
The pinning parameters $\nu_k$ are i.i.d. random variables {\comm{whose}} law is independent of $N$.
{\comm{It is assumed that this law has a bounded density and that there exist constants}} $0 < \nu_- < \nu_+ < \infty$ such that 
\begin{equation*}
\Proba (\nu_- \, \le \, \nu_k \, \le \, \nu_+) \; = \; 1.
\end{equation*}
\item
The value of $q_{N+1}$ depends on the boundary conditions (BC).
For fixed BC, we put $q_{N+1} = 0$, while for periodic BC, we put $q_{N+1} = q_1$.
For further use, we also define $q_0 = q_1$ for fixed BC, and $q_0 = q_{N}$ for periodic BC.
\item
We assume $\lambda' \ge 0$.
The potentials $U$ and $V$ are symmetric, meaning that $U(-x) = U(x)$ and $V(-x) = V(x)$ for every $x \in \R$.
They belong to $\mathcal C^{\infty}_{temp}(\R)$, 
the space of infinitely differentiable functions with polynomial growth.
It is moreover assumed that
\begin{equation*}
\int_\R \ed^{-U(x)} \, \dd x \; < \; + \infty 
\quad \text{and} \quad 
\partial_x^2 U (x) \; \ge \; 0,
\end{equation*}
and that there exists $c > 0$ such that 
\begin{equation*}
c \; \le \; 1 + \lambda' \partial_x^2 V (x) \; \le \; c^{-1}.
\end{equation*}
\end{itemize}

For $x=(x_1, \ldots, x_d)\in \RR^d$ and $y=(y_1, \ldots,y_d) \in \R^d$, let $\langle x,y\rangle = x_1 y_1 + \dots + x_d y_d$ be the canonical scalar product of $x$ and $y$.
The harmonic hamiltonian $H_{har}$ can also be written as
\begin{equation*}
H_{har} (q,p) \; = \; \frac{1}{2}\langle p,p \rangle \, + \, \frac{1}{2}\langle q, \Phi q\rangle, 
\end{equation*}
if we introduce the symmetric matrix
$\Phi \in \R^{N\times N}$ of the form $\Phi = -\Delta + W$, where $\Delta$ is the discrete Laplacian, and $W$ a random ``potential".
The precise definition of $\Phi$ depends on the BC: 
\begin{align*}
\Phi_{j,k} \; = &\; (2 + \nu_k) \delta_{j,k} - \delta_{j,k+1} - \delta_{j,k-1}
\qquad \text{(fixed BC)}, \\
\Phi_{j,k} \; = &\; (2 + \nu_k) \delta_{j,k} - \delta_{j,k+1} - \delta_{j,k-1}  -\delta_{j,1}\delta_{k,N} - \delta_{j,N}\delta_{k,1}
\qquad \text{(periodic BC)},
\end{align*}
for $1 \le j,k \le N$.

\paragraph{The dynamics.}
The generator of the hamiltonian part of the dynamics is written as
\begin{equation*}
A \; = A_{har} + \lambda' A_{anh}
\end{equation*}
with
\begin{equation*}
A_{har} 
\; = \; 
\sum_{k=1}^N \big( \partial_{p_k} H_{har}\multiplication \partial_{q_k} - \partial_{q_k} H_{har} \multiplication \partial_{p_k} \big)
\; = \;
\langle p , \nabla_q \rangle - \langle \Phi q , \nabla_p \rangle
\end{equation*}
and
\begin{equation*}
A_{anh}
\; = \; 
-\sum_{k=1}^N 
\partial_{q_k} H_{anh} \multiplication \partial_{p_k}
\; = \; 
- \big( 
\partial_x U(q_k) 
+ \partial_x V (q_k - q_{k-1})
- \partial_x V(q_{k+1} - q_k)
\big) \multiplication \partial_{p_k}. 
\end{equation*}
{\comm{Here, for $x=(x_1, \ldots, x_N) \in \RR^N$, $\nabla_x = (\partial_{x_1}, \ldots, \partial_{x_N})$ .}}The generator of the noise is defined to be
\begin{equation*}
\lambda\, S u 
\; = \; 
\lambda \sum_{k=1}^N \big( u (\dots ,-p_k,\dots ) - u (\dots , p_k, \dots ) \big),
\end{equation*}
with $\lambda \ge \lambda'$.
The generator of the full dynamics is given by 
\begin{equation*}
L = A + \lambda\, S.
\end{equation*}
We denote by $X_{(\lambda,\lambda')}^t(x)$, or simply by $X^t (x)$, the value of the Markov process generated by $L$ at time $t\ge 0$, starting from $x = (q,p)\in\R^{2N}$. 

\paragraph{Expectations.}
Three different expectations will be considered. 
We define
\begin{itemize}
\item
$\mu_T$: the expectation with respect to the Gibbs measure at temperature $T$, 
\item
$\Mean$: the expectation with respect to the realizations of the noise,
\item
$\Mean_\nu$: the expectation with respect to the realizations of the pinnings. 
\end{itemize}
In Section \ref{sec: convergence}, it will sometimes be useful to specify the dependence of the Gibbs measure on the system size $N$ ; 
we then will write it $\mu_T^{(N)}$.


The Gibbs measure $\mu_T$ is explicitly given by 
\begin{equation*}
\mu_T (u) \; = \; \frac{1}{Z_T} \int_{\R^{2N}} u(x) \, \ed^{-H(x)/T} \, \dd x, {\comm{\quad u:\RR^{2N} \to \RR,}}
\end{equation*}
where $Z_T$ is a normalizing factor such that $\mu_T$ is a probability measure on $\R^{2N}$.
We will need some properties of this measure. Let us write
\begin{equation*}
Z_T^{-1} \ed^{- H(x)/T} \;= \; \rho' (p_1) \dots \rho' (p_N) \multiplication \rho'' (q), 
\end{equation*}
with $\rho' (p_k) = \ed^{-p_k^2/2T} / \sqrt{2\pi T}$ for $1 \le k \le N$.

{\comm{When $\lambda' = 0$, the density $\rho''$ is Gaussian:
\begin{equation*}
\rho'' (q) 
\; = \;
(2\pi T)^{-N/2} \multiplication ( \det \Phi )^{1/2} \multiplication \ed^{-\langle q, \Phi q \rangle /2T}. 
\end{equation*}}}
Since $\nu_k \ge \nu_- > 0$, 
it follows from Lemma 1.1 in \cite{bry} that $|(\Phi^{-1})_{i,j}| \le \mathrm C\, \ed^{-c |j-i|}$, for some constants $\mathrm C <+\infty$ and $c > 0$ independent of $N$.
This implies in particular the decay of correlations 
\begin{equation*}
\mu_T (q_i q_j) \; = \; T\, (\Phi^{-1})_{i,j} \; \le \; \mathrm C\, T\, \ed^{-c |j-i|}.
\end{equation*}

When $\lambda' > 0$, the density {\comm{$\rho''$}} is not Gaussian anymore. 
We here impose the extra assumption that $\nu_-$ is large enough. 
In that case, our hypotheses ensure that the conclusions of Theorem 3.1 in \cite{bod} hold: 
there exist constants $\mathrm C < + \infty$ and $c >0$ such that, for every $f,g \in \mathcal C^\infty_{temp} (\R^N)$ satisfying $\mu_T (f) = \mu_T (g) = 0$, 
\begin{equation}\label{decay of correlations Bodineau Helffer}
\big|\mu_T (f\multiplication g) \big|
\; \le \; 
\mathrm C \, \ed^{- c \, d ( \mathrm S(f), \mathrm S (g)) } 
\Big(
\mu_T \big( \big\langle \nabla_q f , \nabla_q f \big\rangle\big)
\multiplication
\mu_T \big( \big\langle \nabla_q g , \nabla_q g \big\rangle\big)
\Big)^{1/2}.
\end{equation} 
Here, $\mathrm S (u) $ is the support of {\comm{the}} function $u$, 
defined as the smallest set of integers such that $u$ can be written as a function of the variables $x_l$ for $l \in \mathrm S (u)$, 
whereas $ d ( \mathrm S(f), \mathrm S (g))$ is the smallest distance between any integer in $\mathrm S(f)$ and any integer in $\mathrm S(g)$.
Using that $\mu_T (q_k) = 0$ for $1 \le k \le N$, 
it is checked from \eqref{decay of correlations Bodineau Helffer} that every function $u\in \mathcal C^\infty_{temp}(\R^N)$
with given support independent of $N$ is such that $\| u \|_{\Lp^1 (\mu_T)}$ is bounded uniformly in $N$.

\paragraph{The current.}
The local energy $e_k$ of atom $k$ is defined as
\begin{equation*}
e_k 
\; = \;
e_{k,har} + \lambda' e_{k,anh}
\end{equation*}
with
\begin{equation*}
e_{k,har}
\; = \; 
\frac{p_k^2}{2} + \nu_k \frac{q_k^2}{2} + \frac{1}{4} (q_k - q_{k-1})^2 + \frac{1}{4} (q_{k+1} - q_k)^2
\qquad \text{for} \qquad 2 \le k \le N-1,
\end{equation*}
and
\begin{equation*}
e_{k,anh}
\; = \; 
U (q_k) + \frac{V(q_k - q_{k-1})}{2} + \frac{V(q_{k+1} - q_k)}{2}
\qquad \text{for} \qquad 2 \le k \le N-1.
\end{equation*}
For periodic B.C., these expressions are still valid when $k=1$ or $k=N$. 
For fixed B.C. instead, all the terms involving the differences $(q_0 - q_1)$ or $(q_{N+1} - q_N)$ in the previous expressions {\comm{have}} to be multiplied by $2$. 
These definitions ensure that the total energy $H$ is the sum of the local energies. 

The definition of the dynamics implies that 
\begin{equation*}
\dd e_k 
\; = \;
\big( j_{k-1} - j_{k} \big) \, \dd t
\end{equation*}
for local currents 
\begin{equation*}
j_k \; = \; j_{k,har} + \lambda' j_{k,anh}
\end{equation*}
defined as follows for $0 \le k \le N$.
First, for $1 \le k \le N-1$,
\begin{equation}\label{local current}
j_{k,har} 
\; = \;
\frac{1}{2} (p_k + p_{k+1}) (q_{k} - q_{k+1})
\qquad
\text{and}
\qquad
j_{k,anh}
\; = \;
\frac{1}{2} (p_k + p_{k+1}) \,
\partial_x V (q_{k} - q_{k+1}).
\end{equation}
Next, $j_{0,1} = j_{N,N+1} = 0$ for fixed B.C.
Finally, $j_{0}$ and $j_{N}$ are still given by \eqref{local current} for periodic B.C., with the conventions $p_0 = p_N$ and $p_{N+1}=p_1$.
The total current and the rescaled total current are then defined by 
\begin{align}
J_N 
\; = &\; 
J_{N,har} + \lambda' J_{N,anh}
\; = \; 
\sum_{k=1}^N j_{k,har} + \lambda' \sum_{k=1}^{N} j_{k,anh}
\label{total current} \\
\mathcal J_N 
\; = &\;
\mathcal J_{N,har} + \lambda' \mathcal J_{N,anh}
\; = \; 
\frac{ J_{N,har}}{\sqrt N} + \lambda' \frac{ J_{N,anh}}{\sqrt N}.
\label{rescaled current}
\end{align}

\subsection{Results}
For a given realization of the pinnings, the (Green-Kubo) conductivity $\kappa = \kappa (\lambda, \lambda')$ of the chain is defined as 
\begin{equation}\label{Green Kubo}
\kappa (\lambda,\lambda') 
\; = \; 
\frac{1}{T^2}\lim_{t\rightarrow \infty}\lim_{N\rightarrow \infty} \kappa_{t,N}(\lambda,\lambda')
\; =\;
\frac{1}{T^2}\lim_{t\rightarrow \infty}\lim_{N\rightarrow \infty}
\mu_T\Mean \Bigg( \frac{1}{\sqrt t}\int_0^t \mathcal J_N \circ X_{(\lambda,\lambda')}^s\, \dd s \Bigg)^2
\end{equation}
if this limit exists. 
The choice of the boundary conditions is expected to play no role in this formula since the {\comm{volume size}} $N$ is sent to infinity for fixed time.
The disorder averaged conductivity is defined by replacing $ \mu_T\Mean$ by $\Mean_\nu \mu_T\Mean$ in \eqref{Green Kubo}. 
By ergodicity, the conductivity and the disorder averaged conductivity are expected to coincide for almost all realization of the pinnings (see \cite{ber}).
The dependence of $\kappa (\lambda , \lambda')$ on the temperature $T$ will not be analysed in this work, 
so that we can consider $T$ as a fixed given parameter.

We first obtain an upper bound on the disorder averaged conductivity.
\begin{Theorem}\label{theo: anharmonic}
Let $0 \le \lambda' \le \lambda$. 
With the assumptions introduced up to here, if $\nu_-$ is large enough,
and for fixed boundary conditions, 
\begin{equation}\label{upper bound on Green Kubo}
\frac{1}{T^2}\limsup_{t\rightarrow \infty}\limsup_{N\rightarrow \infty}
\Mean_\nu \mu_T \Mean \Bigg( \frac{1}{\sqrt t}\int_0^t \mathcal J_N \circ X^s_{(\lambda,\lambda')}\, \dd s \Bigg)^2 \; = \; \mathcal O(\lambda)
\qquad \text{as} \qquad \lambda \rightarrow 0.
\end{equation} 
\end{Theorem}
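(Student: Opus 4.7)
The plan is to exploit the representation $j_{k,har} = -A_{har} u_k$ to be established in Section \ref{sec: Poisson equation}, in which each $u_k$ is localized near site $k$ and has $\| u_k \|_{\Lp^2(\mu_T)}$ bounded uniformly in $N$ (for almost every realization of the disorder). Setting $g_N := N^{-1/2} \sum_{k=1}^N u_k$, so that $\mathcal{J}_{N,har} = -A_{har} g_N$, and using $L = A_{har} + \lambda' A_{anh} + \lambda S$, the rescaled total current decomposes as
\begin{equation*}
\mathcal{J}_N \; = \; - L g_N \, + \, \mathcal{R}_N,
\qquad
\mathcal{R}_N \; := \; \lambda \, S g_N \, + \, \lambda' \big( A_{anh} g_N + \mathcal{J}_{N,anh} \big).
\end{equation*}
The whole argument reduces to showing that each piece contributes $\mathcal O(\lambda)$ to the limit of the rescaled variance.

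For $-L g_N$ I would use the Dynkin decomposition $\int_0^t (-L g_N)(X^s) \dd s = g_N (X^0) - g_N (X^t) + M_t$, in which $M_t$ is a martingale with $\mu_T \Mean ( M_t^2 ) = \lambda t \, \mu_T ( \Gamma_S (g_N) )$ and $\Gamma_S (g) = \sum_{k=1}^N ( g(\dots, -p_k, \dots) - g )^2$. Localization of the $u_k$ combined with the decay of correlations (Brydges-Federbush when $\lambda' = 0$, Bodineau-Helffer \eqref{decay of correlations Bodineau Helffer} when $\lambda' > 0$ and $\nu_-$ is large) yields uniform in $N$ and in $\nu$ bounds on $\mu_T (g_N^2)$ and $\mu_T(\Gamma_S(g_N))$; dividing by $t$ and passing to the limit delivers the expected contribution of order $\lambda$.

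For the remainder $\mathcal{R}_N$, the crucial observation is that $\|\mathcal{R}_N\|_{\Lp^2(\mu_T)}^2$ is itself $\mathcal O(\lambda^2)$. Indeed, since $g_N$ is a quadratic polynomial in the momenta, each $g_N(\dots, -p_k, \dots) - g_N$ is localized near $k$ and of $\Lp^2$ norm $\mathcal O (N^{-1/2})$, so that $\|\lambda S g_N\|_{\Lp^2(\mu_T)}^2 \le C \lambda^2$ by approximate orthogonality; the terms $\lambda' A_{anh} g_N$ and $\lambda' \mathcal{J}_{N,anh}$ are likewise bounded by $C (\lambda')^2 \le C \lambda^2$ via localization, decay of correlations, and $\lambda' \le \lambda$. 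Moreover $\mathcal{R}_N$ sits in subspaces where $-\lambda S$ has a spectral gap proportional to $\lambda$ (its summands are antisymmetric in either one or two individual momenta). Invoking the standard variance inequality for additive functionals of a Markov process whose generator has antisymmetric part $A$ and symmetric part $\lambda S$,
\begin{equation*}
\limsup_{t \to \infty} \frac{1}{t}\, \mu_T \Mean \Big[ \Big( \int_0^t \mathcal{R}_N (X^s) \, \dd s \Big)^2 \Big]
\; \le \;
\frac{C_0}{2 \lambda}\, \| \mathcal{R}_N \|_{\Lp^2(\mu_T)}^2
\; = \;
\mathcal O(\lambda).
\end{equation*}
Combining the two bounds via $(a + b)^2 \le 2a^2 + 2b^2$ and averaging over the disorder (permitted as all constants are $\nu$-uniform) then gives \eqref{upper bound on Green Kubo}.

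The main obstacle lies upstream in Section \ref{sec: Poisson equation}: solving the Poisson equation $j_{k,har} = -A_{har} u_k$ with uniform exponential localization of $u_k$ rests on Kunz-Souillard localization estimates for the random matrix $\Phi$, and is the delicate new input of the paper. A secondary technical point is the justification of the variance inequality applied to $\mathcal{R}_N$ in the present non-reversible setting; the cleanest route is to work directly with the variational formulation of the $H_{-1}$ norm, which avoids invoking a sector condition between $A$ and $\lambda S$.
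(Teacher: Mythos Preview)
Your proposal is correct and follows essentially the same strategy as the paper: both rely on Lemma \ref{lemm: Poisson equation} to write $\mathcal{J}_{N,har} = -A_{har}u_N$, decompose $\int_0^t(-Lu_N)\circ X^s\,\dd s$ via Dynkin's formula into a martingale plus a boundary term, and then control the leftover pieces through the $H_{-1}$ bound of Kipnis--Landim against $(-\lambda S)^{-1}$.

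The bookkeeping differs only slightly. The paper first peels off $\lambda'\mathcal{J}_{N,anh}$ by Cauchy--Schwarz and then uses the algebraic identity $A_{anh}u_N = \tfrac{1}{2}(-S)A_{anh}u_N$ (equation \eqref{Aanh u is in the image of S}) to absorb the anharmonic correction into a single term of the form $\lambda S(\,\cdot\,)$, to which the $H_{-1}$ bound applies cleanly. You instead collect everything into one remainder $\mathcal{R}_N$, observe that each of its summands lies in an eigenspace of $-S$ with eigenvalue at least $2$, and bound $\|\mathcal{R}_N\|_{\Lp^2(\mu_T)}^2 = \mathcal{O}(\lambda^2)$ directly. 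Both routes deliver the same $\mathcal{O}(\lambda)$; yours is arguably a touch more direct since it bypasses \eqref{Aanh u is in the image of S}, while the paper's packaging makes the exact constants in the final estimate slightly more explicit.

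One minor phrasing issue: the bounds furnished by Lemma \ref{lemm: Poisson equation} are in $\Lp^2(\Mean_\nu\mu_T)$, i.e.\ \emph{after} averaging over the disorder, not pointwise-in-$\nu$ bounds on $\mu_T(g_N^2)$ or $\mu_T(\Gamma_S(g_N))$. Since the target statement \eqref{upper bound on Green Kubo} is itself disorder-averaged and all the dynamical inequalities you invoke hold for each fixed $\nu$, this causes no trouble---you simply apply $\Mean_\nu$ at the end---but your clause ``permitted as all constants are $\nu$-uniform'' should be reworded.
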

\Remarks
1.
When $\lambda = 0$, the proof (see Section \ref{sec: upper bound}) actually shows that 
\begin{equation*}
\frac{1}{T^2} \limsup_{N\rightarrow \infty}\Mean_\nu\mu_T \Bigg( \frac{1}{\sqrt t}\int_0^t \mathcal J_N \circ X^s_{(0,0)}\, \dd s \Bigg)^2
\; = \; 
\mathcal O \big( t^{-1} \big)
\qquad \text{as} \qquad t\rightarrow \infty.
\end{equation*}
This bound had apparently never been published before. 
It says that the unperturbed chain behaves like a perfect insulator: 
the current integrated over arbitrarily long times remains bounded in $\Lp^2(\Mean_\nu \mu_T)$.

\noindent
2.
The proof (see Section \ref{sec: upper bound}) shares some common features with a method used in \cite{liv} 
to obtain a weak coupling limit for noisy hamiltonian systems.
In our case, we may indeed see the eigenmodes of the unperturbed system as weakly coupled by the noise and the anharmonic potentials. 

\noindent
3. 
The choice of fixed boundary conditions just turns out to be {\comm{more}} convenient for technical reasons (see Section \ref{sec: Poisson equation}).

\noindent
4. 
The hypothesis that $\nu_-$ is large enough is only used to ensure the exponential decay of correlations of the Gibbs measure when $\lambda' >0$. 

Next, in the absence of anharmonicity ($\lambda' = 0$), results become more refined.
\begin{Theorem}\label{theo: harmonic}
Let $\lambda > 0$, let $\lambda' = 0$, and let us assume that hypotheses introduced up to here hold.
For almost all realizations of the pinnings, 
the Green-Kubo conductivity \eqref{Green Kubo} of the chain is well defined, and in fact
\begin{equation}\label{convergence of Green Kubo}
\kappa (\lambda,0) 
\; = \; 
\frac{1}{T^2}\lim_{t\rightarrow \infty}\lim_{N\rightarrow \infty} \Mean_\nu\mu_T\Mean \Bigg( \frac{1}{\sqrt t}\int_0^t \mathcal J_N \circ X_{(\lambda,0)}^s\, \dd s \Bigg)^2,
\end{equation}
this last limit being independent of the choice of boundary conditions (fixed or periodic).
Moreover, there exists a constant $c > 0$ such that, for every $\lambda \in ]0,1[$,
\begin{equation}\label{upper and lower bound on Green Kubo}
c \lambda \; \le \; \kappa (\lambda, 0) \; \le \; c^{-1} \lambda.
\end{equation}
\end{Theorem}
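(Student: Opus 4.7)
The statement comprises three points: the Green-Kubo limit \eqref{convergence of Green Kubo} must exist, it must be independent of boundary conditions, and it must satisfy $c\lambda \le \kappa(\lambda,0) \le c^{-1}\lambda$. I would derive the upper bound, the existence and the BC-independence from the Poisson equation of Section \ref{sec: Poisson equation}, and reach the matching lower bound through a variational argument in the spirit of \cite{set} and \cite{ber}.

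For the upper bound and the existence of the limit, Section \ref{sec: Poisson equation} supplies a local function $u_k$ for each $k$ with $j_{k,har} = -A_{har} u_k$, exponentially localized near $k$ and with $\Lp^2 (\Mean_\nu \mu_T)$ norm uniformly bounded in $N$. Since $\lambda' = 0$, we have $L = A_{har} + \lambda S$, and Dynkin's formula gives
\begin{equation*}
\int_0^t j_{k,har} \circ X^s \, \dd s \; = \; -\big( u_k \circ X^t - u_k \circ X^0 \big) + M^k_t + \lambda \int_0^t S u_k \circ X^s \, \dd s,
\end{equation*}
where $M^k$ is a martingale whose quadratic variation is $\lambda$ times the carr\'e du champ of $S$ applied to $u_k$. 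Summing over $k$ and dividing by $\sqrt{t N}$, the martingale contribution to the $\Lp^2 (\mu_T \Mean)$ norm is $\mathcal O (\lambda)$ by It\^o's isometry, the boundary contribution is $\mathcal O (1/t)$, and the drift $\lambda \int_0^t S u_k \circ X^s \, \dd s$ is controlled by stationarity of $\mu_T$ together with the exponential decay of correlations of the underlying Gaussian measure (Lemma 1.1 of \cite{bry}). This delivers the upper bound in \eqref{upper and lower bound on Green Kubo} and, thanks to the martingale decomposition, actually yields convergence (not just a $\limsup$) of the double limit in \eqref{Green Kubo}. Independence of boundary conditions follows because passing from fixed to periodic BC amounts to a rank-two perturbation of $\Phi$ supported at sites $1$ and $N$; since the $u_k$ are localized, this perturbation shifts $\mathcal J_N$ by $\mathcal O (1/\sqrt N)$ in $\Lp^2 (\Mean_\nu \mu_T)$. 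It is convenient to carry out this analysis at the level of the resolvent $\kappa_{z,N} = \mu_T \Mean \langle \mathcal J_N , (z-L)^{-1} \mathcal J_N \rangle / (N T^2)$, sending $N \to \infty$ first and then $z \to 0$.

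For the lower bound, I would invoke the variational lower bound on the diffusive variance in the non-reversible setting (see \cite{set}, applied to an analogous model in \cite{ber}):
\begin{equation*}
\kappa (\lambda , 0) \; \ge \; \frac{2}{T^2} \liminf_{N\to\infty} \frac{1}{N} \sup_f \Big\{ 2 \mu_T (J_{N,har} \, f) - \lambda \, \mu_T \big( \langle -S f, f \rangle \big) \Big\},
\end{equation*}
the supremum ranging over a suitable class of smooth test functions. A trial of the form $f = \sum_k u_k$ vanishes by antisymmetry of $A_{har}$, so $f$ must have a non-trivial even-in-$p$ component. Going to the eigenmode basis $\{ \xi_\alpha \}$ of $\Phi$, and following \cite{ber}, I would take $f$ bilinear in the mode coordinates, tuning it to couple pairs of localized modes $\xi_\alpha , \xi_\beta$ whose spatial supports overlap. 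The Dirichlet term $\lambda \mu_T (\langle -Sf, f\rangle)$ is then a sum of contributions of size $\lambda$, one per coupled pair, while $\mu_T (J_{N,har}\, f)$ captures the overlap of the chosen pair with the energy current. The main obstacle is to show that, almost surely in the disorder, sufficiently many pairs of eigenmodes with overlapping spatial supports and comparable eigenvalues exist in a box of size $N$, so that after optimizing the amplitude of $f$ and dividing by $N$ one ends up with a strictly positive quantity of order $\lambda$. This rests on the exponential localization supplied by \cite{kun}\cite{dam}, together with a quantitative control on the local density of states of $\Phi$, and constitutes the technical heart of the argument, as in \cite{ber}.
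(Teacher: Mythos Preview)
Your treatment of the upper bound is aligned with the paper's Section~\ref{sec: upper bound}: the decomposition $\mathcal J_N = -L u_N + \lambda S u_N$ followed by a martingale/boundary/drift splitting is precisely what is done there. For the existence of the limit and BC-independence, however, your sketch is too optimistic. The martingale decomposition by itself does not deliver convergence; the paper (Section~\ref{sec: convergence}) works at the level of the resolvent $u_{z,N}$, imports from \cite{ber} the existence of $\lim_{z\to 0}\ll u_z,j_0\gg$ and $\lim_{z\to 0}z\ll u_z,u_z\gg=0$ for the infinite system, and then proves $\mu_T^{(N)}(u_{z,N}\,\mathcal J_N)\to\ll u_z,j_0\gg$ by first expanding in powers of $z^{-1}L$ for large $|z|$, applying the ergodic theorem in $\nu$, and finally extending to all $z\in\mathbb D$ via Montel's theorem for normal families. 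BC-independence comes out of both BC converging to this same infinite-volume quantity, not from a rank-two perturbation argument.

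The real gap is the lower bound. First, the variational inequality you wrote is false: in the non-reversible setting the formula from \cite{set} (equation \eqref{eq:vf0} in the paper) is
\begin{equation*}
\mu_T \big( J_N \, (z-L)^{-1} J_N \big) \; = \; \sup_f \Big\{ 2\mu_T(J_N f) - \mu_T\big(f(z-\lambda S)f\big) - \mu_T\big(A_{har}f \, (z-\lambda S)^{-1} A_{har}f\big) \Big\},
\end{equation*}
and you have dropped the last, non-negative, term; without it the right-hand side dominates the left and your inequality does not hold. That term is in fact the one of order $1/\lambda$, and it is what produces the $\lambda$ scaling after optimizing the amplitude. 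Second, the strategy you outline---eigenmode basis, coupled pairs of overlapping localized modes, control on the local density of states---is neither what the paper does nor what \cite{ber} does. Section~\ref{sec: lower bound} simply plugs in the explicit trial function $f = a\langle q,\Phi M p\rangle$ with $M_{i,j}=\delta_{i,j-1}-\delta_{i,j+1}$. Then $A_{har}f = a\sum_{i\ne j}(\Phi M)_{i,j}\,p_ip_j$ (the $q$-part cancels because $\Phi M\Phi$ is antisymmetric), so $(z-\lambda S)^{-1}A_{har}f=(z+4\lambda)^{-1}A_{har}f$, and each of the three terms is computed in two or three lines, yielding a bound that is uniform in the disorder realization. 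The ``technical heart'' you describe does not appear.
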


{\comm{The rest of this}} article is devoted to the proof of these theorems, which is constructed as follows.

\ProofOf{of Theorems \ref{theo: anharmonic} and \ref{theo: harmonic}}
The upper bound \eqref{upper bound on Green Kubo} is derived in Section \ref{sec: upper bound}, 
assuming that Lemma \ref{lemm: Poisson equation} holds. 
This lemma is stated and shown in Section \ref{sec: Poisson equation} ; 
it encapsulates the informations we need about the localization of the eigenmodes of the unperturbed system ($\lambda = \lambda' = 0$).
The existence of $\kappa (\lambda , 0)$ for almost every realization of the pinnings, together with \eqref{convergence of Green Kubo},
are shown in Section \ref{sec: convergence}.
Finally, a lower bound on the conductivity when $\lambda' = 0$ is obtained in Section \ref{sec: lower bound}.
This shows \eqref{upper and lower bound on Green Kubo}. 
$\square$

\subsection{Heuristic comments}\label{subsec: heuristic}
We would like to give here some intuition on the conductivity of disordered harmonic chains perturbed by a weak noise only, 
so with $\lambda >0$ small and $\lambda' = 0$.
We will develop in a more probabilistic way some ideas from \cite{dha}.
Our results cover the case where the pinning parameters $\nu_k$ are bounded from below by a positive constant, but 
it could be obviously desirable to understand the unpinned chain as well, in which case randomness has to be putted on the value of the masses.
We handle here both cases.

Let us first assume that $\nu_k \ge c$ for some $c > 0$, and let us consider a typical realization of the pinnings. 
In the absence of noise ($\lambda = 0$), 
the dynamics of the chain is actually equivalent to that of $N$ independent one-dimensional harmonic oscillators, called eigenmodes
(see Subsection \ref{subsec: eigenmode expansion} and formulas (\ref{eigenmode solution q}-\ref{eigenmode solution p}) in particular).
Since the chain is pinned at each site, the eigenfrequencies of these modes are uniformly bounded away from zero. 
As a result, all modes are expected to be exponentially localized. 
We can thus naively think that, to each particle, is associated a mode localized near the equilibrium position of this particle. 

When the noise is turned on ($\lambda >0$), energy starts being exchanged between near modes.
Let us assume that, initially, energy is distributed uniformly between all the modes, except around the origin, where some more energy is added. 
We expect this extra amount of energy to diffuse with time, with a variance proportional to $\kappa (0,\lambda) \multiplication t$ at time $t$.
Since flips of velocity occur at random times and with rate $\lambda$, 
we could compare the location of this extra energy at time $t$ to the position of a standard random walk after $n = \lambda t$ steps. 
Therefore, denoting by $\delta_k$ the increments of this walk, we find that
\begin{equation*}
\kappa (\lambda , 0) 
\; \sim \; 
\Big\langle \Big( \frac{1}{\sqrt{t}} \sum_{k=1}^n \delta_k \Big)^2 \Big\rangle 
\; \sim \; \lambda. 
\end{equation*}
This intuitive picture will only be partially justified, 
as explained in the remark after the proof of Theorem \ref{theo: anharmonic} in Section \ref{sec: upper bound}. 

Let us now consider the unpinned chain.
So we put $\nu_k = 0$ and we change $p_k^2$ by $p_k^2 /m_k$ in the Hamiltonian, where {\comm{the masses}} $m_k$ are i.i.d. positive random variables.
We consider a typical realization of the masses. 
In contrast with the pinned chain, the eigenfrequencies of the modes are now distributed in an interval of the form $[0,c]$, for some $c >0$.
This has an important consequence on the localization of the modes.
It is indeed expected that the localization length $l$ of a mode and its eigenfrequency $\omega$ are related through the formula $l \sim 1/ \omega^2$.

Here again, the noise induces exchange of energy between modes, 
and we still would like to compare $\kappa (\lambda , 0) \multiplication t$ with the variance of a centered random walk with increments $\delta_k$.
However, due to the unlocalized low modes, $\delta_k$ can now take larger values than in the pinned case. 
Assuming that the eigenfrequencies are uniformly distributed in $[0,c]$, we guess that, for large $a$, 
\begin{equation*}
\Proba (|\delta_k| \ge a) 
\; \sim \;
\Proba (1/\omega^2 \ge a)
\; \sim \; 
1/\sqrt a.
\end{equation*}
This however neglects a fact. 
Since energy does not travel faster than ballistically, 
and since successive flips of the velocity are spaced by time intervals of order $1/\lambda$, 
it is reasonable to introduce the cut-off $\Proba (|\delta_k| > 1/ \lambda) = 0$.
With this distribution for $|\delta_k|$, and with $n = \lambda t$, we now find
\begin{equation*}
\kappa (\lambda , 0) 
\; \sim \; 
\Big\langle \Big( \frac{1}{\sqrt{t}} \sum_{k=1}^n \delta_k \Big)^2 \Big\rangle 
\; \sim \; \lambda^{-1/2}. 
\end{equation*}
This scaling is numerically observed in \cite{dha}. 
The arguments leading to this conclusion are very approximative however, and it should be desirable to analyse this case rigorously as well.

\section{Upper bound on the conductivity}\label{sec: upper bound}
We here proceed to the proof of Theorem \ref{theo: anharmonic}.
We assume that Lemma \ref{lemm: Poisson equation} in Section \ref{sec: Poisson equation} holds: 
there exists a sequence $(u_N)_{N\ge 1} \subset \Lp^2 (\Mean_\nu \mu_T)$ such that 
$- A_{har} u_N = \mathcal J_{N,har}$, 
and that $(u_N)_{N \ge 1}$ and $(A_{anh} u_N)_{N\ge 1}$ are both bounded sequences in $\Lp^2 (\Mean_\nu \mu_T)$. 
Moreover $u_N$ is of the form $u_N (q,p) = \langle q, \alpha_N q \rangle + \langle p, \gamma_N p \rangle + c_N$, 
where $\alpha_N ,\gamma_N \in \R^{N\times N}$ are symmetric matrices, and where $c_N \in \R$. 

\ProofOf{of \eqref{upper bound on Green Kubo}} 
Let $0 \le \lambda' \le \lambda$, and let $u_N$ be the sequence obtained by Lemma \ref{lemm: Poisson equation} in Section \ref{sec: Poisson equation}.
Before starting, let us observe that, due to the special form of the function $u_N$, we may write
\begin{equation}\label{def A anh u 1}
A_{anh} u_N \; = \; \sum_{l=1}^N \phi_l (q) \, p_l, 
\end{equation}
with
\begin{equation}\label{def A anh u 2}
\phi_l (q) \; = \; 2 \sum_{k=1}^N \gamma_{k,l} \multiplication
\Big(
\partial_x V (q_{k+1} - q_k) - \partial_x V (q_k - q_{k-1}) - \partial_x U (q_k)
\Big), 
\end{equation}
where $(\gamma_{k,l})_{1 \le k,l\le N}$ are the entries of $\gamma_N$.
It follows in particular that
\begin{equation}\label{Aanh u is in the image of S}
A_{anh} u_N \; = \; \frac{1}{2} (-S) A_{anh} u_N.
\end{equation}

Now, since $\mathcal J_N = \mathcal J_{N,har} + \lambda' \mathcal J_{N,anh}$, we find using Cauchy-Schwarz inequality that
\begin{multline*}
\Mean_\nu \mu_T \Mean \left( \frac{1}{\sqrt t} \int_0^t \mathcal J_N \circ X^s \, \dd s \right)^2
\; \le \;
\Mean_\nu \mu_T \Mean \left( \frac{1}{\sqrt t} \int_0^t \mathcal J_{N,har} \circ X^s \, \dd s \right)^2
+
(\lambda')^2 \Mean_\nu \mu_T \Mean \left( \frac{1}{\sqrt t} \int_0^t \mathcal J_{N,anh} \circ X_s \, \dd s \right)^2 \\
+
2 \lambda'\,
\left(
\Mean_\nu \mu_T\Mean \left( \frac{1}{\sqrt t} \int_0^t \mathcal J_{N,har} \circ X^s \, \dd s \right)^2 
\multiplication
\Mean_\nu \mu_T\Mean \left( \frac{1}{\sqrt t} \int_1^t \mathcal J_{N,anh} \circ X^s \, \dd s \right)^2 
\right)^{1/2}.
\end{multline*}
Since $\mathcal J_{N,anh} = \frac{1}{2}(-S)\mathcal J_{N,anh}$, a classical bound {\comm{(\cite{kip}, Appendix 1, Proposition 6.1)}} furnishes 
\begin{equation*}
\Mean_\nu \mu_T\Mean \left( \frac{1}{\sqrt t} \int_0^t \mathcal J_{N,anh} \circ X^s \, \dd s \right)^2 
\; \le \; 
\mathrm C \, \Mean_\nu\mu_T \big( \mathcal J_{N,anh} \multiplication (-\lambda S)^{-1} \mathcal J_{N,anh} \big)
\; \le \; 
\frac{\mathrm C}{2\lambda} \, \Mean_\nu\mu_T \big( \mathcal J_{N,anh}^2 \big)
\end{equation*}
where $\mathrm C < +\infty$ is a universal constant.
By \eqref{decay of correlations Bodineau Helffer}, $\Mean_\nu\mu_T \big( \mathcal J_{N,anh}^2 \big)$ is  {\comm{uniformly bounded}} in $N$.
Therefore
\begin{equation*}
\limsup_{t\rightarrow \infty} \limsup_{N\rightarrow \infty}
\Mean_\nu \mu_T\Mean \left( \frac{1}{\sqrt t} \int_0^t \mathcal J_{N,anh} \circ X^s \, \dd s \right)^2 
\; = \; 
\mathcal O (\lambda^{-1}).
\end{equation*}
It suffices thus to establish that
\begin{equation*}
\limsup_{t\rightarrow \infty} \limsup_{N\rightarrow \infty} \Mean_\nu \mu_T \Mean \left( \frac{1}{\sqrt t} \int_0^t \mathcal J_{N,har} \circ X^s \, \dd s \right)^2 
\; = \; \mathcal O (\lambda) .
\end{equation*}
We write
\begin{equation*}
\mathcal J_{N,har} 
\; = \; 
-A_{har} u_N 
\; = \; 
-L u_N + \lambda' A_{anh} u_N + \lambda S u_N
\; = \; 
- L u_N + \lambda \, S \Big( Id - \frac{\lambda'}{2\lambda} A_{anh} \Big) u_N,
\end{equation*}
where the second equality is obtained by means of \eqref{Aanh u is in the image of S}.
Therefore
\begin{align} 
\frac{1}{\sqrt t} \int_0^t \mathcal J_{N,har} \circ X^s \, \dd s
\; = &\; 
\frac{-1}{\sqrt t} \int_0^t L u_N \circ X^s \, \dd s
+
\frac{\lambda}{\sqrt t} \int_0^t S \left( Id - \frac{\lambda'}{2\lambda }A_{anh} \right) u_N \circ X^s \, \dd s \nonumber \\
=&\;
\frac{1}{\sqrt t} \mathcal M_t - \frac{u \circ X^t - u}{\sqrt t}
+
\frac{\lambda}{\sqrt t} \int_0^t S \left( Id - \frac{\lambda'}{2\lambda}A_{anh} \right) u_N \circ X^s \, \dd s, \label{decomposition of the process}
\end{align}
where $\mathcal M_t$ is a martingale given by
\begin{equation*}
\mathcal M_t 
\; = \; 
\int_0^t \sum_{j=1}^N S_j u_N \circ X_s \, (\dd \mathrm N_s^j - \lambda \dd s),
\end{equation*}
with $\mathrm N_s^j$ the Poisson process that flips the momentum of particle $j$. 

It now suffices to establish that the three terms in the right hand side of \eqref{decomposition of the process} are $\mathcal O (\lambda)$ in $\Lp^2 (\Mean_\nu\Mean_T)$.
Let us first show that $\mu_T ( u_N \multiplication (-S) u_N ) \le 4 \| u_N \|_{\Lp^2 (\mu_T)}^2$.
Writing
\begin{equation*}
u_N = u_{N}^{p,p,0} +u_{N}^{p,p,1} + u_{N}^{q,q} + c_N
\end{equation*}
with
\begin{equation*}
u_{N}^{p,p,0} =\sum_{i \ne j} \gamma_{i,j} p_{i}p_j , 
\quad u_N^{p,p,1}= \sum_i \gamma_{i,i} p_i^2,
\quad u_{N}^{q,q} = \langle q, \alpha_N q \rangle,
\end{equation*}
we get indeed 
\begin{equation*}
\mu_T ( u_N \multiplication (-S)u_N )
\; = \; 
\mu_T ( u_N \multiplication (-S)u_N^{p,p,0} )
\; = \;
4 \, \mu_T ( u_N^{p,p,0} \multiplication u_N^{p,p,0} )
\end{equation*}
and
\begin{equation*}
\mu_T ( u_N \multiplication u_N ) 
\; = \; 
\mu_T \big( u_N^{p,p,0} \multiplication u_N^{p,p,0} \big) + 
\mu_T \big( (u_{N}^{p,p,1} + u_{N}^{q,q} + c_N )^2 \big)
+ 
2\, \mu_T \big( (u_{N}^{p,p,1} + u_{N}^{q,q} + c_N ) \multiplication u_N^{p,p,0} \big) .
\end{equation*} 
The claim follows since $\mu_T \big( (u_{N}^{p,p,1} + u_{N}^{q,q} + c_N ) \multiplication u_N^{p,p,0} \big) = 0$.

So first, 
\begin{equation*}
\mu_T\Mean \left( \frac{1}{\sqrt t} \mathcal M_t \right)^2 
\; = \; 
2 \lambda \mu_T\big(  u_N \multiplication (-S) u_N \big)
\; \le \; 8 \lambda \, \|u_N \|^2_{\Lp^2 (\mu_T)}. 
\end{equation*}
Next, 
\begin{equation*}
\mu_T\Mean \left( \frac{u \circ X_t - u}{\sqrt t} \right)^2
\; \le \; 
\frac{2}{t} \,
\| u_N\|^2_{\Lp^2 (\mu_T)} .
\end{equation*}
Finally, by a classical bound {\comm{(\cite{kip}, Appendix 1, Proposition 6.1)}}, 
\begin{align*}
\mu_T\Mean \left( \frac{\lambda}{\sqrt t} \int_0^t S \left(Id - \frac{\lambda'}{2\lambda}A_{anh} \right) u_N \circ X_s \, \dd s \right)^2 
\; \le &\; 
\mathrm C \, \lambda^2 \mu_T\left( S \left(Id - \frac{\lambda'}{2\lambda}A_{anh} \right) u_N \multiplication
(-\lambda S)^{-1} S \left(Id - \frac{\lambda'}{2\lambda}A_{anh} \right) u_N \right) \\
\; = &\; 
\mathrm C \, \lambda \mu_T\left( \left(Id - \frac{\lambda'}{2\lambda}A_{anh} \right) u_N \multiplication (-S) \left(Id - \frac{\lambda'}{2\lambda}A_{anh} \right) u_N \right) \\
\; = &\; \mathrm C \, \lambda \left( \mu_T \big( u_N \multiplication (-S) u_N \big) + \frac{1}{2} \left(\frac{\lambda'}{\lambda}\right)^2 \| A_{anh} u _N \|^2_{\Lp^2 (\mu_T)} \right) \\
\; \le &\; \mathrm C \, \lambda \left( \|u_N \|^2_{\Lp^2 (\mu_T)} + \| A_{anh} u _N \|^2_{\Lp^2 (\mu_T)} \right) 
\end{align*}
where \eqref{Aanh u is in the image of S} and 
\begin{equation*}
\mu_T ( u_N, A_{anh} u_N )
\; =\; 
\mu_T\Big( \big( \langle q, \alpha_N q \rangle + \langle p, \gamma_N p \rangle + c_N \big) \multiplication \sum_{l=1}^N \phi_l (q) \, p_l \Big)
\; = \; 
0
\end{equation*}
have been used to get the second equality.
Taking the expectation over the pinnings, the proof is completed since $(u_N)_N$ and $(A_{anh} u_N)_N$ are bounded sequences in $\Lp^{2}(\Mean_\nu \mu_T)$.
$\square$

\Remark
When $\lambda' = 0$, formula \eqref{decomposition of the process} becomes 
\begin{equation}\label{decomposition of the process when no anharmonicity} 
\frac{1}{\sqrt t} \int_0^t \mathcal J_N \circ X^s \, \dd s
\; = \;
\frac{1}{\sqrt t} \mathcal M_t - \frac{u_N \circ X^t - u_N}{\sqrt t}
+
\frac{\lambda}{\sqrt t} \int_0^t S u_N \circ X^s \, \dd s.
\end{equation}
Now, since $Su_N = -4 \sum_{1 \le k\ne l \le N}\gamma_{k,l}\, p_k p_l$, it is computed that 
\begin{equation*}
\int_0^t \mathcal J_N \circ X_{t-s} \, \dd s 
\; = \; 
- \int_0^t \mathcal J_N \circ X_s \, \dd s 
\quad \text{and} \quad
\int_0^t S u_N \circ X_{t-s} \, \dd s 
\; = \; 
\int_0^t S u_N \circ X_s \, \dd s. 
\end{equation*}
The measure on the paths being invariant under time reversal, it thus holds that 
\begin{equation*}
\mu_T\Mean \Bigg( \int_0^t \mathcal J_N \circ X_s \, \dd s \multiplication \int_0^t S u_N \circ X_s \, \dd s \Bigg)
\; = \; 0.
\end{equation*}
We therefore deduce from \eqref{decomposition of the process when no anharmonicity} that 
\begin{equation*}
\mu_T\Mean \Bigg( \frac{1}{\sqrt t}\int_0^t \mathcal J_N (s)\, \dd s \Bigg)^2
\; = \; 
\mu_T\Mean \Big( \frac{1}{\sqrt t} \mathcal M_t \Big)^2
- \; 
\mu_T\Mean \Bigg( \frac{\lambda}{\sqrt t}\int_0^t Su_N \circ X_s \, \dd s \Bigg)^2 
\; + \; r(t)
\end{equation*}
where $r (t)$ is quantity that vanishes in the limit $t\rightarrow \infty$.
We see thus that our proof does not completely justify the heuristic developed in Subsection \ref{subsec: heuristic}, 
due to the second term in the right hand side of this last equation. 
As explained after the statement of Lemma \ref{lemm: Poisson equation} below, the sequence $u_N$ should not be unique. 
It could be that a good choice of sequence $u_N$ {\comm{makes this second term of order $\mathcal O (\lambda^2)$}}.

\section{Poisson equation for the unperturbed dynamics}\label{sec: Poisson equation}
In this section, we state and prove the following lemma.
Fixed BC are assumed for the whole section. 
\begin{Lemma}\label{lemm: Poisson equation}
Let $\lambda' \ge 0$, and assume fixed boundary conditions.
For every $N \ge 1$, and for almost every realization of the pinnings, there exist a function $u_N$ of the form
\begin{equation*}
u_N (q,p) \; = \; \langle q, \alpha_N q \rangle + \langle p , \gamma_N p \rangle + c_N,
\end{equation*}
where $\alpha_N,\gamma_N\in \R^{N\times N}$ are symmetric matrices and where $c_N \in \R$,
such that 
\begin{equation}\label{Poisson equation}
- A_{har} u_N \; = \; \mathcal J_{N,har}. 
\end{equation}
Moreover, the functions $u_N$ can be taken so that 
\begin{equation*}
(u_N)_{N \ge 1} 
\quad \text{and} \quad
(A_{anh} u_N)_{N\ge 1}
\quad \text{are bounded sequences in} \quad 
\Lp^{2}(\Mean_\nu \mu_T).
\end{equation*}
\end{Lemma}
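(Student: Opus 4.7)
The plan is to diagonalize the unperturbed harmonic dynamics and solve the Poisson equation mode by mode in the eigenbasis of $\Phi$, then translate the localization of eigenvectors into the required $\Lp^2$ bounds. Since $\Phi$ is almost surely a positive-definite symmetric matrix (because $\nu_k \ge \nu_- >0$), write $\Phi = O\Omega^2 O^T$, with $O\in\R^{N\times N}$ orthogonal and $\Omega^2 = \mathrm{diag}(\omega_1^2, \ldots,\omega_N^2)$. Set $\psi_r$ to be the $r$-th column of $O$. From \eqref{local current}, the rescaled harmonic current can be written as $\mathcal J_{N,har} = \langle p, \mathcal M q\rangle$ for a tridiagonal matrix $\mathcal M$ of bandwidth one and with entries of order $N^{-1/2}$.

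First, I would look for $u_N$ without $qp$ cross terms, as in the statement. A direct computation gives $-A_{har}\bigl(\langle q,\alpha q\rangle + \langle p,\gamma p\rangle\bigr) = 2\langle p,(\gamma\Phi - \alpha)q\rangle$, so the Poisson equation \eqref{Poisson equation} is equivalent to $\gamma\Phi - \alpha = \frac{1}{2}\mathcal M$. The symmetry of $\gamma$ and $\alpha$ forces $\gamma$ to solve the commutator equation $[\gamma,\Phi] = \frac{1}{2}(\mathcal M-\mathcal M^T)$; one then sets $\alpha = \gamma\Phi - \frac{1}{2}\mathcal M$. In the eigenbasis, writing $\tilde\gamma = O^T\gamma O$ and $\tilde{\mathcal M} = O^T\mathcal M O$, this reads
\begin{equation*}
\tilde\gamma_{rs} \; = \; \frac{\tilde{\mathcal M}_{rs} - \tilde{\mathcal M}_{sr}}{2(\omega_s^2 - \omega_r^2)} \qquad (r\ne s),
\end{equation*}
with the diagonal entries $\tilde\gamma_{rr}$ left as a free parameter (this is the source of non-uniqueness alluded to after the statement); I would set them to zero. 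This gives a well-defined solution $u_N$, and the constant $c_N$ can be chosen so that $\mu_T(u_N) = 0$.

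The $\Lp^2$ bounds are the substance of the lemma. Since $p$ is Gaussian under $\mu_T$ with variance $T$ and independent of $q$, Wick's formula yields $\mu_T\bigl(\langle p,\gamma p\rangle^2\bigr) \le \mathrm C\, T^2\,\mathrm{tr}(\gamma^2)$, and by the decay of correlations \eqref{decay of correlations Bodineau Helffer} an analogous bound holds for $\mu_T(\langle q,\alpha q\rangle^2)$ up to factors depending on $\alpha$. Thus matters reduce to estimating $\mathbb{E}_\nu\,\mathrm{tr}(\gamma^2) = \mathbb{E}_\nu\sum_{r\ne s}\tilde\gamma_{rs}^2$ and the corresponding trace for $\alpha$. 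The crucial input is the Kunz--Souillard localization result (see \cite{kun},\cite{dam}): for $\mathbb E_\nu$-almost every realization, each $\psi_r$ is exponentially localized around some center $k_r \in \{1,\dots,N\}$, and one has quantitative bounds of the form
\begin{equation*}
\mathbb E_\nu \sum_{r : \omega_r^2 \in I}|\psi_r(k)|^2|\psi_r(l)|^2 \; \le \; \mathrm C\,|I|\,\ed^{-c|k-l|}
\end{equation*}
together with a bound on the density of small gaps (Minami-type) that may be needed to control $(\omega_s^2 - \omega_r^2)^{-2}$. Since $\mathcal M$ has bandwidth one, $\tilde{\mathcal M}_{rs} = N^{-1/2}\sum_{|i-j|\le 1} \mathcal M'_{ij}\psi_r(i)\psi_s(j)$ is exponentially small in $|k_r - k_s|$. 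The key estimate will be to pair this exponential smallness with the small-denominator factor to get $\mathbb{E}_\nu \sum_{r\ne s}\tilde\gamma_{rs}^2 \le \mathrm C$ uniformly in $N$; the extra $N^{-1}$ from the rescaling of $\mathcal M$ compensates for the $N$ pairs $(r,s)$ with comparable localization centers.

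Finally, using \eqref{def A anh u 1}--\eqref{def A anh u 2}, $A_{anh}u_N = \sum_l\phi_l(q)p_l$ where $\phi_l$ involves a sum $\sum_k \gamma_{k,l}[\cdots]$ of bounded local functions of $q$. Gaussian integration in $p$ and decay of correlations \eqref{decay of correlations Bodineau Helffer} in $q$ reduce $\mathbb E_\nu\mu_T((A_{anh}u_N)^2)$ to a sum of the form $\sum_l \sum_{k,k'}\gamma_{k,l}\gamma_{k',l}\,\ed^{-c|k-k'|}$, which is controlled by $\mathrm{tr}(\gamma^2)$ times a bounded factor, hence by the same estimate as above. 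The main obstacle I anticipate is the small-denominator control: making the combination of the exponential smallness of $\tilde{\mathcal M}_{rs}$ and the possibly small gaps $\omega_s^2-\omega_r^2$ into a uniform $\Lp^2$ bound. This is precisely where disorder averaging $\mathbb E_\nu$ is essential, and where the strongest localization inputs of \cite{kun}\cite{dam} must be invoked.
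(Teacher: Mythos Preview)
Your reduction to the commutator equation $[\gamma,\Phi]=\tfrac{1}{2}(\mathcal M-\mathcal M^T)$ and the eigenbasis formula $\tilde\gamma_{rs}=(\tilde{\mathcal M}_{rs}-\tilde{\mathcal M}_{sr})/(2(\omega_s^2-\omega_r^2))$ are correct and coincide with what the paper first obtains. Where you diverge is in the handling of the small denominators, and this is precisely where the paper's real idea lies.

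The paper does \emph{not} attack the factors $(\omega_s^2-\omega_r^2)^{-1}$ probabilistically. Instead it decomposes $\mathcal J_{N,har}$ into the local currents $j_l$ and solves $-A_{har}w_l=j_l-\mu_T(j_l)$ for each $l$ separately. The key observation is a Wronskian-type identity coming from the tridiagonal recurrence $\langle\xi^j,m+1\rangle=(2+\nu_m-\omega_j^2)\langle\xi^j,m\rangle-\langle\xi^j,m-1\rangle$ satisfied by the eigenvectors under fixed BC: one computes
\[
\langle l,\xi^j\rangle\langle l-1,\xi^k\rangle-\langle l-1,\xi^j\rangle\langle l,\xi^k\rangle
=-(\omega_j^2-\omega_k^2)\,\langle l-1,\xi^j\rangle\langle l-1,\xi^k\rangle+\big(\text{same expression with }l\to l-1\big),
\]
so that the denominator $\omega_j^2-\omega_k^2$ in the eigenbasis formula for $w_l$ cancels \emph{exactly}. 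Telescoping in $l$ then yields closed formulas of the type
\[
\gamma_{s,t}(l)\;=\;\sum_{j=1}^{l-1}\sum_{k=1}^N\langle j,\xi^k\rangle^2\langle s,\xi^k\rangle\langle t,\xi^k\rangle,
\]
which contain no denominator at all. At this point the only localization input needed is the eigenfunction-correlator bound $\Mean_\nu\sum_k|\langle r,\xi^k\rangle\langle t,\xi^k\rangle|\le\mathrm C\,\ed^{-c|r-t|}$ from Kunz--Souillard, which immediately gives $\Mean_\nu\gamma_{s,t}(l)^2\le\mathrm C\,\ed^{-c(|s-l|+|t-l|)}$ (and similarly for $\alpha(l)$ via $2\alpha(l)=2\Phi\gamma(l)-B(l)$). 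The $\Lp^2$ bounds on $u_N$ and $A_{anh}u_N$ then follow by summing $\Mean_\nu\mu_T(w_m w_n)$ and $\Mean_\nu\mu_T(A_{anh}w_m\,A_{anh}w_n)$, each of which decays like $\ed^{-c|m-n|}$.

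Your proposed route via Minami-type gap estimates is not what the paper does, and it is not clear it can be closed with the cited inputs: a Minami bound controls the probability that two eigenvalues fall in a short interval, but you would need a \emph{joint} estimate coupling $(\omega_r^2-\omega_s^2)^{-2}$ with the overlap $|\tilde{\mathcal M}_{rs}|^2$, and neither \cite{kun} nor \cite{dam} supplies anything of that kind. The algebraic cancellation above is both the simpler route and the actual content of the proof; it is also the reason fixed BC are assumed, since the recurrence is used all the way to the boundary.
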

\Remarks
1. The parameter $\lambda'$ only plays a role through the definition of the measure $\mu_T$.

\noindent
2. For a given value of $N$ and for almost every realization of the pinnings, 
the unperturbed dynamics is integrable, meaning here that it can be decomposed {\comm{into}} $N$ ergodic components, 
each of them corresponding to the motion of a single one-dimensional harmonic oscillator
(see Subsection \ref{subsec: eigenmode expansion} and (\ref{eigenmode solution q}-\ref{eigenmode solution p}) in particular).
This has two implications. 
First, since \eqref{Poisson equation} admits a solution, we conclude that 
the current $\mathcal J_N$ is of mean zero with respect to the microcanonical measures of each ergodic component of the dynamics. 
Next, the solution $u_N$ is not unique since every function $f$ constant on the ergodic components of the dynamics satisfies $-A_{har} f = 0$.

\ProofOf{of Lemma \ref{lemm: Poisson equation}} 
To simplify notations, we will generally not write the dependence on $N$ explicitly.
The proof is made of several steps.

\subsection{Identifying $(u_N)_{N\ge 1}$: eigenmode expansion}\label{subsec: eigenmode expansion}
Let $z > 0$ and let $1 \le l,m \le N$. 
Let us consider the equation 
\begin{equation*}
(z - A_{har}) v_{l,m,z} \; = \; q_l p_m.
\end{equation*}
The solution $v_{l,m,z}$ exists and is unique. 
It is given by 
\begin{equation}\label{basic expression u l m z N}
v_{l,m,z} (x) \; = \; \int_0^{\infty} \ed^{-zs} \left[ q_l \circ X^s_{(0,0)} (x) \multiplication p_m \circ X^s_{(0,0)} (x) \right]  \, \dd s.
\end{equation}
We will analyse $v_{l,m,z}$ to obtain the sequence $u_N$.
Although we assumed fixed BC, all the results of this subsection apply for periodic BC as well.

\paragraph{Solutions to Hamilton's equations.}
The matrix $\Phi$ is a real symmetric positive definite matrix in $\R^{N\times N}$, 
and there exist thus an orthonormal basis $(\xi^k)_{1 \le k \le N}$ of $\R^N$, and a sequence of positive real numbers $(\omega_k^2)_{1 \le k \le N}$, such that 
\begin{equation*}
\Phi \xi^k \; = \, \omega^2_k\, \xi^k. 
\end{equation*}
It may be checked that
\begin{equation}\label{bounds eigenvalues}
\min \{\nu_j : 1 \le j \le N \} \; \le \; \omega_k^2 \; \le \; \max \{\nu_j : 1 \le j \le N \} + 4
\end{equation}
for $1 \le k \le N$.
According to Proposition II.1 in \cite{kun}, for almost all realization of the pinnings, none of the eigenvalue is degenerate: 
\begin{equation}\label{non degeneracy eigenfrequencies}
\omega_j \ne \omega_k \quad \text{if} \quad j\ne k, \qquad 1 \le j,k \le N. 
\end{equation}
In the sequel, we will assume that \eqref{non degeneracy eigenfrequencies} holds.

When $\lambda = \lambda' = 0$, Hamilton's equations write 
\begin{equation*}
\dd q \; = \; p \, \dd t, 
\qquad
\dd p \; = \; - \Phi q\, \dd t.
\end{equation*}
For initial conditions $(q,p)$, the solutions write 
\begin{align}
q(t) \; = &\; \sum_{k=1}^N \Big( \langle q , \xi^k \rangle \cos \omega_k t + \frac{1}{\omega_k} \langle p , \xi^k \rangle \sin \omega_k t \Big) \, \xi^k,
\label{eigenmode solution q}\\
p(t) \; = &\; \sum_{k=1}^N \Big(-\omega_k \langle q , \xi^k \rangle \sin \omega_k t + \langle p , \xi^k \rangle \cos \omega_k t \Big) \, \xi^k.
\label{eigenmode solution p}
\end{align} 

\paragraph{An expression for $v_{l,m,z}$.}
To determine $v_{l,m,z}$, we just need to insert the solutions (\ref{eigenmode solution q}-\ref{eigenmode solution p}) {\comm{into}} the definition \eqref{basic expression u l m z N}, 
and then compute the integral, which is a sum of Laplace transforms of sines and cosines: 
\begin{align}
v_{l,m,z} (q,p) \; = &\;
\frac{1}{4} \sum_{k=1}^N \langle l , \xi^k \rangle \langle m , \xi^k \rangle
\Big( - \langle q , \xi^k \rangle^2 + \frac{1}{\omega_k^2} \langle p , \xi^k \rangle^2 \Big) \nonumber \\
& + \sum_{1\le j\ne k \le N} \langle l , \xi^j \rangle \langle m , \xi^k \rangle
\Big( 
\frac{\omega_k^2}{\omega_j^2 - \omega_k^2} \langle q , \xi^j \rangle \langle q , \xi^k \rangle 
+ \frac{1}{\omega_j^2 - \omega_k^2} \langle p , \xi^j \rangle \langle p , \xi^k \rangle 
\Big) \nonumber \\
& + \mathcal O (z), \label{expression for u l m z N en modes propres}
\end{align}
where $\langle j ,\xi^k \rangle$ denotes the $j^{\text{th}}$ component of the vector $\xi^k$, 
and where the rest term $\mathcal O (z)$ is a polynomial of the form 
$\langle q, \tilde\alpha_z q \rangle + \langle q, \tilde \beta_z p \rangle + \langle p, \tilde\gamma_z p \rangle$,
where $\tilde \alpha_z$ and $\tilde \gamma_z$ can be taken to be symmetric. 
We define 
\begin{equation*}
v_{l,m} \; = \; \lim_{z\rightarrow 0} v_{l,m,z}.
\end{equation*}
It is observed that $v_{l,m}$ is of the form $\langle q, \tilde\alpha q \rangle + \langle p, \tilde\gamma p \rangle$
where $\tilde \alpha$ and $\tilde \gamma$ can be taken to be symmetric. 

\paragraph{Defining the solution $u_N$.}
For fixed BC, the total current is given by
\begin{equation*}
J_N \; = \; \frac{1}{2}(q_1 p_1 - q_N p_N) \, + \, \frac{1}{2} \sum_{k=1}^{N-1} \big(q_k p_{k+1} - q_{k+1} p_k \big) 
\end{equation*}
Setting
\begin{equation}\label{def of w l}
w_{l} \; = \; v_{l,l-1} - v_{l-1,l} \; - \; \mu_T (v_{l,l-1} - v_{l-1,l})
\end{equation}
for $2 \le l \le N$
and 
\begin{equation}\label{def of w 1}
w_1 \; = \; v_{N,N} - v_{1,1} \; - \; \mu_T (v_{N,N} - v_{1,1}),
\end{equation}
we define
\begin{equation*}
u_N \; = \; \frac{-1}{2 \sqrt N} \sum_{k=1}^{N} w_{l}.
\end{equation*}
The function $u_N$ is of the form $u_N = \langle q, \alpha_N q \rangle + \langle p, \gamma_N p \rangle + c_N$, 
where $\alpha_N$ and $\gamma_N$ are symmetric matrices, and where $c_N\in\R$. 

Let us show that $u_N$ solves $-A_{anh} u_N = \mathcal J_N$.
We may assume that $c_N = 0$ without loss of generality.
The current $\mathcal J_N$ can be written as $\mathcal J_N = \langle q, B p \rangle$. 
The function $u_N$ has been obtained as the limit as $z \rightarrow 0$ of the function $u_{z}$ of the form
$u_{z} = \langle q, \alpha_z q \rangle + \langle q, \beta_z p \rangle + \langle p, \gamma_z p \rangle$
which solves $(z - A_{har}) u_{z} = \mathcal J_N$, and with $\alpha_z$ and $\gamma_z$ symmetric matrices.
Since
\begin{equation*}
(z - A_{har}) u_{z} 
\; = \; 
\langle q, (z \alpha_z + \beta_z \Phi) q \rangle
+
\big\langle q, \big( z \beta_z - 2(\alpha_z - \Phi \gamma_z) \big) p \big\rangle
+
\langle p, (z - \beta_z) p \rangle,
\end{equation*}
it holds that {\footnote{{\comm{Here and in the following $M^\dagger$ denotes the transpose matrix of the matrix $M$.}} }}
\begin{equation*}
z \alpha_z + \frac{1}{2} (\beta_z \Phi + \Phi \beta_z^\dagger ) \; = \; 0, 
\quad
z \beta_z - 2(\alpha_z - \Phi \gamma_z) \; = \; B,
\quad
z - \frac{1}{2} (\beta_z + \beta_z^\dagger) \; = \; 0.
\end{equation*}
We know that $(\alpha_z,\beta_z,\gamma_z) \rightarrow (\alpha, 0 , \gamma)$ as $z\rightarrow 0$, with $\alpha$ and $\gamma$ symmetric, so that 
$-2 (\alpha - \Phi \gamma) = B$.
Taking into account that $\alpha$ and $\gamma$ are symmetric, we deduce that
\begin{equation}\label{alpha N and gamma N}
\Phi \gamma - \gamma \Phi \; = \; \frac{1}{2} (B - B^\dagger), 
\qquad
2 \alpha \; = \; 2 \Phi \gamma - B. 
\end{equation}
It is checked that, if two symmetric matrices $\alpha$ and $\gamma$ satisfy these relations, 
then $u_N = \langle q, \alpha q \rangle + \langle p, \gamma p \rangle$ solves the equation $- A_{har} u_N = \mathcal J_N$.

\subsection{A new expression for $w_{l}$}
For $1 \le l \le N$, the function $w_l$ defined by \eqref{def of w l} or \eqref{def of w 1} can be written as 
\begin{equation*}
w_l \; = \; \langle q , \alpha (l) q \rangle + \langle p , \gamma (l) p\rangle + c(l), 
\end{equation*}
where $\alpha (l)$ and $\gamma (l)$ are symmetric matrices, and where $c(l)\in \R$. 
A relation similar to \eqref{alpha N and gamma N} is satisfied: 
with the definitions
\begin{equation*}
\big( B(l) \big)_{m,n} \; = \; \delta_{l,l-1} (m,n) - \delta_{l-1,l} (m,n) \quad (2 \le l \le N)
\quad \text{and} \quad
\big( B(1) \big)_{m,n} \; = \; \delta_{N,N}(m,n) - \delta_{1,1} (m,n), 
\end{equation*}
for $1 \le m,n \le N$,
we write
\begin{equation}\label{link between alpha and gamma}
2 \alpha (l) \; = \; 2 \Phi \gamma (l) - B (l), 
\end{equation}
for $1 \le l \le N$.
Therefore the knowledge of the matrices $\gamma$ implies that of the matrices $\alpha$.

An expression for the matrices $\gamma (l)$ can be recovered from \eqref{expression for u l m z N en modes propres} with $z=0$.
We will now work this out in order to obtain a more tractable formula.
We show here that, for $2 \le l \le N$, 
\begin{align}
\gamma_{s,s} (l)
\; = &\; 
- \sum_{j=l}^N \sum_{k=1}^N \langle s , \xi^k \rangle^2 \langle j , \xi^k \rangle^2, 
\qquad 1 \le s \le l-1, \nonumber \\
\gamma_{s,s} (l)
\; = &\;
\sum_{j=1}^{l-1} \sum_{k=1}^N \langle j , \xi^k \rangle^2 \langle s , \xi^k \rangle^2, 
\qquad l \le s \le N, \nonumber \\
\gamma_{s,t} (l)
\; = &\; 
\sum_{j=1}^{l-1} \sum_{k=1}^N \langle j , \xi^k \rangle^2 \langle s , \xi^k \rangle \langle t , \xi^k \rangle ,
\qquad 1 \le s \ne t \le N, \nonumber \\
= & \; 
-\sum_{j=l}^{N} \sum_{k=1}^N \langle j , \xi^k \rangle^2 \langle s , \xi^k \rangle \langle t , \xi^k \rangle ,
\qquad 1 \le s \ne t \le N \label{coefficients de w l 0 p}
\end{align}
and
\begin{equation}\label{coefficients de u 1 1 p et u N N p}
\gamma_{s,t} (1) \; = \; 
\frac{1}{4} \sum_{k=1}^N \frac{\langle N, \xi^k \rangle^2}{\omega_k^2} \langle s,\xi^k \rangle \langle t, \xi^k \rangle
- 
\frac{1}{4} \sum_{k=1}^N \frac{\langle 1, \xi^k \rangle^2}{\omega_k^2} \langle s,\xi^k \rangle \langle t, \xi^k \rangle,
\qquad
1 \le s,t \le N.
\end{equation}

Formula \eqref{coefficients de u 1 1 p et u N N p} is directly derived from \eqref{expression for u l m z N en modes propres}, 
noting that $\gamma (1)$ is the only symmetric matrix such that $w_1 (0,p) = \sum_{s,t} \gamma_{s,t}(1) p_s p_t$.
To derive \eqref{coefficients de w l 0 p}, we observe that $\gamma(l)$ is the only symmetric matrix such that 
$w_l (0,p) = \sum_{s,t} \gamma_{s,t}(l) p_s p_t$. 
Starting from \eqref{expression for u l m z N en modes propres}, 
we deduce
\begin{equation*}
w_l (0,p) \; = \; \big( u_{l,l-1} - u_{l-1,l} \big) (0,p)
\; = \; 
\sum_{1 \le j \ne k \le N} 
\Big( 
\langle l ,\xi^j \rangle \langle l-1 , \xi^k \rangle - \langle l -1 ,\xi^j \rangle \langle l , \xi^k \rangle
\Big)
\frac{ \langle p , \xi^j \rangle \langle p ,\xi^k \rangle}{\omega_j^2 - \omega_k^2}.
\end{equation*}
For fixed BC, the eigenvectors $\xi^j$ satisfy the following relations for $1 \le j \le N$:
\begin{align*}
\langle \xi^j , 0 \rangle \; = \; \langle \xi^j , N+1 \rangle 
\; = &\; 0 \quad \text{(by definition)},\\
- \langle \xi^j , m-1 \rangle + (2 + \nu_m) \langle \xi^j , m \rangle - \langle \xi^j , m+1 \rangle 
\; = &\; \omega_j^2 \langle \xi^j , m \rangle, \quad 1 \le m \le N. 
\end{align*}
So the following recurrence relation is satisfied: 
\begin{equation}\label{recurrence relation for the eigenmodes}
\langle \xi^j , m+1 \rangle
\; = \; 
(2 + \nu_m - \omega_j^2) \langle \xi^j , m \rangle - \langle \xi^j , m-1 \rangle, \quad 1 \le m \le N.
\end{equation}

Let us first compute $w_2(0,p)$. Using \eqref{recurrence relation for the eigenmodes}, it comes
\begin{align*}
\langle 2 , \xi^j \rangle \langle 1 , \xi^k \rangle - \langle 1 , \xi^j \rangle \langle 2 , \xi^k \rangle
\; = &\; 
(2 + \nu_1 - \omega_j^2) \langle 1 , \xi^j \rangle \langle 1 , \xi^k \rangle
-
(2 + \nu_1 - \omega_k^2) \langle 1 , \xi^j \rangle \langle 1 , \xi^k \rangle \\
=&\;
- (\omega_j^2 - \omega_k^2) \langle 1 , \xi^j \rangle \langle 1 , \xi^k \rangle.
\end{align*}
Therefore 
\begin{align}
w_2 (0,p) \; = &\; - \sum_{1 \le j \ne k \le N} \langle 1 ,\xi^j \rangle \langle 1 , \xi^k \rangle \langle p , \xi^j \rangle \langle p , \xi^k \rangle 
\; = \; 
- \sum_{1 \le j , k \le N} \langle 1 , \xi^j \rangle \langle 1 , \xi^k \rangle \langle p , \xi^j \rangle \langle p , \xi^k \rangle
\; + \; \sum_{k=1}^N \langle 1 , \xi^k \rangle^2 \langle p , \xi^k \rangle^2 \nonumber\\
\; = &\; 
- \langle 1 , p \rangle^2 + \sum_{k=1}^N \langle 1 , \xi^k \rangle^2 \langle p , \xi^k \rangle^2, \label{expression for w 2}
\end{align}
where the last equality follows from the fact that $(\xi^k)_k$ forms an orthonormal basis. 

Let us now compute $w_l(0,p)$ for $2 < l \le N$. 
Again by \eqref{recurrence relation for the eigenmodes}, 
\begin{align*}
\langle l ,\xi^j \rangle \langle l-1 , \xi^k \rangle - \langle l-1 ,\xi^j \rangle \langle l , \xi^k \rangle
\; = &\; \phantom{-\;}
\Big( (2 + \nu_{l-1} - \omega_j^2) \langle l-1 ,\xi^j \rangle - \langle l-2 ,\xi^j \rangle \Big) \langle l-1 , \xi^k \rangle \\
&\; -\;
\langle l-1 ,\xi^j \rangle \Big( (2 + \nu_{l-1} - \omega_k^2) \langle l -1 , \xi^k \rangle - \langle l -2 , \xi^k \rangle \Big)\\
= &\; 
-\; (\omega_j^2 - \omega_k^2) \langle l-1 , \xi^j \rangle \langle l-1 , \xi^k \rangle \\
&\; + \; \langle l-1 , \xi^j \rangle \langle l -2 , \xi^k \rangle - \langle l-2 , \xi^j \rangle \langle l-1 , \xi^k \rangle.
\end{align*}
Therefore
\begin{align}
w_l(0,p) 
\; =& \;
\; - \sum_{1 \le j \ne k \le N} \langle l-1 , \xi^j \rangle \langle l-1 , \xi^k \rangle \langle p , \xi^j \rangle \langle p , \xi^k \rangle
\; + \; w_{l-1} (0,p) \nonumber \\
= & \; 
- \langle l-1 , p \rangle^2 \; +\; \sum_{k=1}^N \langle l-1 , \xi^k \rangle^2 \langle p , \xi^k \rangle^2 \; + \; w_{l-1} (0,p). \label{expression for w l, l bigger 2}
\end{align}

Combining \eqref{expression for w 2} and \eqref{expression for w l, l bigger 2}, 
we arrive to an expression valid for $2 \le l \le N$: 
\begin{equation*}
w_l (0,p) \; = \; 
\sum_{j=1}^{l-1} \left( \sum_{k=1}^N \langle j , \xi^k \rangle^2 \langle p , \xi^k \rangle^2 - \langle j , p \rangle^2 \right).
\end{equation*}
Let us now write $\langle j , p \rangle^2 = p_j^2$ and 
\begin{equation*}
\langle p , \xi^k \rangle^2
\; = \; \left( \sum_{s} p_s \langle s , \xi^k \rangle \right)^2 
\; = \; 
\sum_{s,t} p_s p_t \langle s , \xi^k \rangle \langle t , \xi^k \rangle.
\end{equation*}
We obtain
\begin{align*}
w_l (0,p) \; = &\; 
\sum_{s,t} p_s p_t \sum_{j=1}^{l-1} \sum_{k=1}^N \langle j , \xi^k \rangle^2 \langle s , \xi^k \rangle \langle t , \xi^k \rangle 
\; - \; \sum_{j=1}^{l-1} p_j^2 \\
= &\; \phantom{+\;}
\sum_{s=1}^{l-1}\, p_s^2 \left( 
\sum_{j=1}^{l-1} \sum_{k=1}^N \langle j , \xi^k \rangle^2 \langle s , \xi^k \rangle^2 - 1
\right) \\
&\; + \; 
\sum_{s=l}^N \, p_s^2 
\sum_{j=1}^{l-1} \sum_{k=1}^N \langle j , \xi^k \rangle^2 \langle s , \xi^k \rangle^2 
\\
& \; + \;
\sum_{1 \le s\ne t \le N} p_s p_t \sum_{j=1}^{l-1} \sum_{k=1}^N \langle j , \xi^k \rangle^2 \langle s , \xi^k \rangle \langle t , \xi^k \rangle .
\end{align*}

In this formula, the coefficients of $p_s^2$ coincide with $\gamma_{s,s}(l)$ given by \eqref{coefficients de w l 0 p} for $l \le s \le N$, 
and the coefficients of $p_s p_t$ with $s \ne t$ coincide with the first expression of $\gamma_{s,t}(l)$ given by \eqref{coefficients de w l 0 p}. 
To recover the coefficients $\gamma_{s,s}(l)$ for $1 \le s \le l-1$, just use the fact that $(\xi^k)_k$ and $(| k \rangle)_k$ are orthonormal basis: 
\begin{align*}
\sum_{j=1}^{l-1} \sum_{k=1}^N \langle j , \xi^k \rangle^2 \langle s , \xi^k \rangle^2 - 1
\; = &\; 
\sum_{k=1}^N \left( 1 - \sum_{j=l}^N \langle j , \xi^k \rangle^2 \right) \langle s , \xi^k \rangle^2 - 1 \\
\; = &\; 
\sum_{k=1}^N \langle s | \xi^k \rangle^2 - 1 - \sum_{j=l}^N \sum_{k=1}^N \langle j , \xi^k \rangle^2 \langle s , \xi^k \rangle^2 
\; = \; 
- \sum_{j=l}^N \sum_{k=1}^N \langle j , \xi^k \rangle^2 \langle s , \xi^k \rangle^2.
\end{align*}
The second expression for the coefficients $\gamma_{s,t}(l)$ with $s\ne t$ in \eqref{coefficients de w l 0 p} is obtained by a similar trick. 

\subsection{Exponential bounds}\label{subsec: exponential bounds}
We show here that there exist constants $\mathrm C < + \infty$ and $c > 0$ independent of $N$ such that 
\begin{equation}\label{exponential bound}
\Mean_\nu \left( \alpha_{j,k}^2 (l) \right) \; \le \; \mathrm C \, \exp \Big( - c \big( |j-l| + |k-l | \big) \Big), 
\quad 
\Mean_\nu \left( \gamma_{j,k}^2 (l) \right) \; \le \; \mathrm C \exp \Big( - c \big( |j-l| + |k-l | \big) \Big), 
\end{equation}
for $2 \le l \le N$ and for $1 \le j,k \le N$. 
This is still valid for $l=1$ if $|k-l|$ is replaced by $\min \{ |k-1|,|k-N|\}$ and $|j-l|$ by $\min \{ |j-1|,|j-N|\}$.
Due to \eqref{link between alpha and gamma}, it suffices to establish these bounds for the matrices $\gamma$.

Let us first observe that the almost sure bounds 
\begin{equation*}
|\gamma_{s,t}(l)| \; \le \; 1
\quad (2 \le l \le N),
\qquad
|\gamma_{s,t}(1)| \; \le \; \frac{1}{2 \min \{ \omega_k^2 : 1 \le k \le N\}}
\end{equation*}
hold for $1 \le s,t \le N$. 
This is directly deduced from \eqref{coefficients de w l 0 p} and \eqref{coefficients de u 1 1 p et u N N p} 
by taking absolute values inside the sums if needed, 
using that $(\xi^k)_k$ and $(| k \rangle)_k$ are orthonormal basis, and Cauchy-Schwarz inequality if needed.
By \eqref{bounds eigenvalues}, $\min \{ \omega_k^2 : 1 \le k \le N\} \ge c > 0$, where $c$ does not depend on $N$.
In particular $\Mean_\nu \left( |\gamma_{s,t}|^p \right) \le \mathrm C_p \Mean_\nu |\gamma_{s,t}|$ for every $p \ge 1$, so that 
we only need to bound $ \Mean_\nu |\gamma_{s,t}|$.

We now will apply localization results originally derived by Kunz and Souillard (\cite{kun}), but we follow the exposition {\comm{given}} by \cite{dam}.
From \eqref{coefficients de w l 0 p} and \eqref{coefficients de u 1 1 p et u N N p}, we see that we are looking for upper bound on the absolute value of sums of the type
\begin{equation*}
\sum_k \langle r ,\xi^k \rangle^2 \langle t ,\xi^k \rangle^2, \quad r < t,
\end{equation*}
and of the type 
\begin{equation*}
\sum_k \langle r , \xi^k \rangle^2 \langle s , \xi^k \rangle \langle t , \xi^k \rangle, 
\quad
\sum_k \langle r , \xi^k \rangle \langle s , \xi^k \rangle^2 \langle t , \xi^k \rangle, 
\quad
\sum_k \langle r , \xi^k \rangle \langle s , \xi^k \rangle \langle t , \xi^k \rangle^2, 
\qquad 
r < s < t.
\end{equation*}
Since $|\langle r, \xi^k\rangle| \le 1$ for $1 \le r,k \le N$, all of them can be bounded by 
\begin{equation*}
\sum_k | \langle r , \xi^k \rangle \langle t , \xi^k \rangle |.
\end{equation*}
By the formula before Lemma 4.3 in \cite{dam}, and the lines after the proof of this lemma, 
we may conclude that there exist constants $\mathrm C < + \infty$ and $c > 0$ independent of $N$ such that
\begin{equation*}
\Mean_\nu \Big( \sum_k | \langle r , \xi^k \rangle \langle t , \xi^k \rangle | \Big) \; \le \; \mathrm C \, \ed^{-c(t-r)}. 
\end{equation*}
Together with the remarks formulated up to here, this allows to deduce \eqref{exponential bound}.

\subsection{Concluding the proof of Lemma \ref{lemm: Poisson equation}}\label{subsec: some integrals}
We write
\begin{equation*}
\Mean_\nu \mu_T ( u_N^2 ) 
\; = \; 
\frac{1}{4 N} \sum_{m,n} \Mean_\nu \mu_T (w_{m} \multiplication w_{n})
\quad \text{and} \quad
\Mean_\nu \mu_T \big( (A_{anh}u_N)^2 \big) 
\; = \; 
\frac{1}{4 N} \sum_{m,n} \Mean_\nu \mu_T ( A_{anh} w_{m} \multiplication A_{anh} w_{n}).
\end{equation*}
We will establish that there exist constants $\mathrm C < +\infty$ and $c >0$ such that 
\begin{equation}\label{decroissance exponentielle des w l}
| \Mean_\nu \mu_T (w_{m} \multiplication w_{n}) | \; \le \; \mathrm C \, \ed^{-c |m-n|}
\quad \text{and} \quad
| \Mean_\nu \mu_T ( A_{anh} w_{m} \multiplication A_{anh} w_{n}) | \; \le \; \mathrm C \, \ed^{-c |m-n|}
\end{equation}
for $1 \le m,n\le N$. This will conclude the proof. 

Let us fix $1 \le m,n \le N$. 
Let us first consider $| \Mean_\nu \mu_T (w_{m} \multiplication w_{n}) |$.
Let us observe that 
the functions $w_l$ are of zero mean by construction, and so the relation 
\begin{equation}\label{zero mean condition}
\sum_{j} \gamma_{j,j}(l) \int p_j^2 \, \dd \mu_T + \sum_{j,k} \alpha_{j,k} (l) \int q_j q_k \, \dd \mu_T + c(l) \; = \; 0
\end{equation}
holds for $1 \le l \le N$.
Using this relation, it is computed that
\begin{align*}
\mu_T (w_{m} \multiplication w_{n})
\; = &\;
\mu_T\bigg(
\Big(
\sum_{i,j} \alpha_{i,j}(m) q_i q_j + \sum_{i,j} \gamma_{i,j}(m) p_i p_j + c(m)
\Big) 
\Big(
\sum_{i,j} \alpha_{i,j}(n) q_i q_j + \sum_{i,j} \gamma_{i,j}(n) p_i p_j + c(n)
\Big) 
\bigg)\\
=&\;
\sum_{i,j,k,l} \alpha_{i,j}(m) \alpha_{k,l}(n) \int q_i q_j q_k q_l \, \dd \mu_T
+ 
\sum_{i,j,k,l} \alpha_{i,j}(m) \gamma_{k,l}(n) \int q_i q_j p_k p_l \, \dd \mu_T \\
&+
\sum_{i,j,k,l} \gamma_{i,j}(m) \alpha_{k,l}(n) \int p_i p_j q_k q_l \, \dd \mu_T
+
\sum_{i,j,k,l} \gamma_{i,j}(m) \gamma_{k,l}(n) \int p_i p_j p_k p_l \, \dd \mu_T
- 
c(m) c(n) \\
=&\; 
S_1 + S_2 + S_3 + S_4 - c(m) c(n).
\end{align*}

Using \eqref{zero mean condition} and the fact that $\int p^2 \, \dd \mu_T = T$, the sum $S_1$ is rewritten as 
\begin{align*}
S_1 
\; = &\;
\sum_{i,j,k,l} \alpha_{i,j}(m) \alpha_{k,l}(n) \int \Big( q_i q_j - \int q_i q_j \, \dd \mu_T\Big) \Big( q_k q_l - \int q_k q_l \, \dd \mu_T\Big) \, \dd \mu_T \\
& + 
\sum_{i,j,k,l} \alpha_{i,j}(m) \alpha_{k,l}(n) \int q_i q_j \, \dd \mu_T \int q_k q_l \, \dd \mu_T \\
= &\;
\sum_{i,j,k,l} \alpha_{i,j}(m) \alpha_{k,l}(n) \int \Big( q_i q_j - \int q_i q_j \, \dd \mu_T\Big) \Big( q_k q_l - \int q_k q_l \, \dd \mu_T\Big) \, \dd \mu_T \\
& +
T^2 \sum_{i,j} \gamma_{i,i}(m) \gamma_{j,j}(n) + T c(m) \sum_i \gamma_{i,i}(n) + T c(n) \sum_i \gamma_{i,i}(m) + c(m)c(n).
\end{align*}
Then, still using \eqref{zero mean condition}, we get
\begin{equation*}
S_2 + S_3 \; = \; 
- T c(m) \sum_i \gamma_{i,i}(n) 
- T c(n) \sum_i \gamma_{i,i}(m)
- 2 T^2 \sum_{i,j} \gamma_{i,i}(m) \gamma_{j,j}(n). 
\end{equation*}
Finally, the terms in the sum $S_4$ are non zero only when 
\begin{equation*}
i=j=k=l, 
\quad
i=j,k=l,i\ne k,
\quad
i=k,j=l,i\ne j,
\quad
i=l,j=k,i\ne j.
\end{equation*}
Using that $\int p^4 \dd \mu_T = 3 (\int p^2 \, \dd \mu_T)^2$ and that $\int p^2 \, \dd \mu_T = T$, 
$S_4$ is seen to be equal to 
\begin{equation*}
S_4 
\; = \; 
T^2 \sum_{i,j} \Big( 
\gamma_{i,i}(m) \gamma_{j,j}(n) + 2 \gamma_{i,j}(m) \gamma_{i,j}(n)
\Big).
\end{equation*}

Therefore
\begin{align*}
\mu_T (w_{m} \multiplication w_{n})
\; = &\;
\sum_{i,j,k,l} \alpha_{i,j}(m) \alpha_{k,l}(n) \int \Big( q_i q_j - \int q_i q_j \, \dd \mu_T\Big) \Big( q_k q_l - \int q_k q_l \, \dd \mu_T\Big) \, \dd \mu_T \\ 
&+ 2T^2 \sum_{i,j} \gamma_{i,j}(m) \gamma_{i,j}(n).
\end{align*}
Applying the decorrelation bound \eqref{decay of correlations Bodineau Helffer} and the exponential estimate \eqref{exponential bound}, 
the result is obtained.

Let us next consider $| \Mean_\nu \mu_T ( A_{anh} w_{m} \multiplication A_{anh} w_{n}) |$.
We find from \eqref{def A anh u 1} that
\begin{equation*}
\mu_T ( A_{anh} w_{m} \multiplication A_{anh} w_{n})
\; = \; 
\sum_{s,t} \mu_T \big( \phi_t(q,m) \, \phi_s (q,n) \, p_s p_t \big)
\; = \; 
T \sum_t \mu_T \big( \phi_t(q,m) \, \phi_t (q,n) \big).
\end{equation*}
Now, it follows from \eqref{def A anh u 2} that
$\phi_t (q,k) \; = \; \sum_s \gamma_{s,t} (k) \rho_s (q)$
for $1 \le k \le N$.
Here $\rho_s (q) = \rho_s (q_{s-1},q_s,q_{s+1})$ is a function of mean zero since the potentials $U$ and $V$ are symmetric. 
We write
\begin{equation*}
\mu_T ( A_{anh} w_{m} \multiplication A_{anh} w_{n})
\; = \; 
\sum_t \sum_{s,s'}
\gamma_{s,t} (m) \, \gamma_{s',t} (n) \, \mu_T (\rho_s \multiplication \rho_{s'}).
\end{equation*}
Applying the decorrelation bound \eqref{decay of correlations Bodineau Helffer} and the exponential estimate \eqref{exponential bound} yield the result. 
$\square$

\section{Convergence results}\label{sec: convergence}
In this section we show the convergence result \eqref{convergence of Green Kubo}.
We assume thus $\lambda >0$ and $\lambda' = 0$.

We start with some definitions (see \cite{ber} for details).
The dynamics defined in Section \ref{sec: model and results} can also be defined for a set of particles indexed in $\Z$ instead of $\Z_N$. 
Points on the phase space are written $x = (q,p)$, with $q = (q_k)_{k\in \Z}$ and $p = (p_k)_{k \in \Z}$. 
Let us denote by $\mathcal L$ the generator of this infinite-dimensional dynamics.
We remember here that $\mu_T^{(N)}$ represents the Gibbs measure of a system of size $N$ ; we denote by $\mu_T^{(\infty)}$ the Gibbs measure of the infinite system
(the dependence on the size will still be dropped in the cases where it is irrelevant). 
We extend the definition \eqref{local current} of local currents $j_{k}$ to all $k\in \Z$ ($j_k = j_{k,har}$ since $\lambda' = 0$).
If $u = u (x,\nu)$, with $\nu = (\nu_k)_{k \in \Z}$ a sequence of pinnings, and if $k \in \Z$, we write $\tau_k u (x,\nu) = u(\tau_k x , \tau_k \nu)$, where
\begin{equation*}
(\tau_k q)_j \; = \; q_{k+j}, 
\quad
(\tau_k p)_j \; = \; p_{k+j},
\quad
(\tau_k \nu)_j \; = \; \nu_{k+j}. 
\end{equation*}
Finally, we denote by $\ll \cdot, \cdot \gg$ the inner-product defined, for local bounded functions $u$ and $v$, by
\begin{equation*}
\ll u, v \gg \; = \; \sum_{k \in \ZZ} \Mean_\nu \big(  \mu_T^{(\infty)} (u \multiplication \tau_k v^*) -\mu_T^{(\infty)} (u) \mu_T^{(\infty)} (v) \big)
\end{equation*}
where $v^*$ is the complex conjugate of $v$,
and by ${\mc H}$ the corresponding Hilbert space, obtained by completion of the bounded local functions. 

We start with two lemmas.
We have no reason to think that Lemma \ref{lemm: powers of L are bounded} still holds if an anharmonic potential is added, 
and this is the main reason why we here restrict ourselves to harmonic interactions. 
\begin{Lemma}\label{lemm: powers of L are bounded}
There exists a constant $\mathrm C < +\infty$ such that, 
for any realization of the pinnings, 
for the finite dimensional dynamics with free or fixed B.C., 
or for the infinite dynamics, 
for any $k\ge 1$ and for any $l \in \Z_N$ (resp. $k \in \Z$ for the infinite dynamics),
\begin{equation*}
\| L^k j_l \|_{\Lp^2 (\mu_T)} \; \le \; \mathrm C^k 
\qquad 
\text{(resp. }\| \mathcal L^k j_l \|_{\Lp^2 (\mu_T)} \; \le \; \mathrm C^k \text{).}
\end{equation*}
\end{Lemma}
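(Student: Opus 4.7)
The idea is to exploit, in the purely harmonic regime $\lambda' = 0$, the fact that $L = A_{har} + \lambda S$ preserves the finite-dimensional space of homogeneous degree-$2$ polynomials in $(q,p)$, reducing the bound to a linear-algebra estimate for a linear operator acting on symmetric matrices.

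First, I would write $j_l = \tfrac{1}{2}(p_l + p_{l+1})(q_l - q_{l+1}) = x^\dagger M_0 x$ with $x = (q,p) \in \RR^{2N}$ and $M_0$ a symmetric matrix with $O(1)$ nonzero entries of bounded size. Setting $J = \begin{pmatrix} 0 & I \\ -\Phi & 0 \end{pmatrix}$, one computes that if $u = x^\dagger M x$ then
\begin{equation*}
A_{har} u \; = \; x^\dagger (M J + J^\dagger M) x,
\end{equation*}
while $S$ kills the $(q,q)$-block of $M$, multiplies the $(q,p)$- and $(p,q)$-blocks by $-2$, and multiplies the off-diagonal part of the $(p,p)$-block by $-4$. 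Both operators preserve the space of zero-mean homogeneous quadratic forms, so $L^k j_l = x^\dagger M_k x$ for a sequence of symmetric matrices $M_k$.

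Second, I would bound $M_k$ in the Frobenius norm. Since $\|MJ + J^\dagger M\|_F \le 2\|J\|_{op}\|M\|_F$, since $\|\Phi\|_{op} \le 4+\nu_+$ by Gershgorin (uniformly in $N$ and in the realization of the pinnings), and since the map on matrices induced by $S$ has operator norm at most $4$, one gets $\|LM\|_F \le \mathrm C_1 \|M\|_F$ with $\mathrm C_1$ uniform in $N$, $l$, $k$, and the disorder. Iterating, $\|M_k\|_F \le \mathrm C_1^k \|M_0\|_F \le \mathrm C_2^k$. I would then convert this into the $\Lp^2(\mu_T)$ estimate via the classical Gaussian identity
\begin{equation*}
\mu_T \big( (x^\dagger M x)^2 \big) \; = \; 2\,\mathrm{tr}(M\Sigma M \Sigma) + \big(\mathrm{tr}(M\Sigma)\big)^2,
\end{equation*}
where $\Sigma = T \cdot \mathrm{diag}(\Phi^{-1}, I)$ is the covariance of $\mu_T$ (recall $\lambda' = 0$, so $\mu_T$ is Gaussian). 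The first term is bounded by $2\|\Sigma\|_{op}^2 \|M_k\|_F^2$, and $\|\Sigma\|_{op}$ is uniformly controlled because $\Phi \ge \nu_- > 0$. Crucially, $\mathrm{tr}(M_k \Sigma) = \mu_T(L^k j_l)$, which vanishes: indeed, $\mu_T$ is invariant under both the Hamiltonian flow and the momentum flips, hence under $L$, so $\mu_T(L^k j_l) = 0$ for $k \ge 1$, while $\mu_T(j_l) = 0$ already by the $p\to -p$ symmetry. Combining yields $\|L^k j_l\|_{\Lp^2(\mu_T)} \le \mathrm C^k$.

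The infinite-volume case runs identically: although the index set is $\ZZ$, $L^k j_l$ is still a quadratic polynomial in finitely many variables (its support grows by at most a bounded amount under each application of $L$), all the operator bounds on $\Phi$ and $\Sigma$ carry over to $\ell^2(\ZZ)$, and $\mu_T^{(\infty)}$ remains invariant under $\mathcal L$. The main \emph{obstacle} is conceptual rather than technical: this reduction breaks down the moment an anharmonic term is added, since $A_{anh}$ raises polynomial degree and the iterates $L^k j_l$ are no longer confined to a finite-dimensional subspace on which $L$ is a bounded linear map — this is exactly why the lemma and the ensuing convergence results are restricted to $\lambda' = 0$.
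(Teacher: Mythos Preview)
Your proof is correct and follows essentially the same route as the paper: both observe that $L$ preserves the space of homogeneous quadratic forms in $(q,p)$, write the action of $L$ as a bounded linear map on the coefficient matrices, and iterate. The only notable difference is in bookkeeping. The paper tracks simultaneously an entrywise bound $|\zeta_{i,j}|\le \mathrm C^k$ and the finite support of the matrices (entries vanish outside a box of side $\mathrm C k$), while you work directly with the Frobenius norm and the Gaussian second-moment identity, handling the trace term via $\mu_T(L^k j_l)=0$. Your packaging is arguably cleaner for this lemma in isolation; the paper's explicit support control, however, is reused in the proof of Lemma~\ref{lemm: almost sure convergence} (to see that $\sum_l \mu_T(j_l\cdot L^k j_0)$ has only $\mathrm C k$ nonzero terms), so if you proceed along the paper's lines afterwards you will eventually need that locality statement anyway.
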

\Proof 
Let us consider the infinite dimensional dynamics ; other cases are similar. 
We can take $l=0$ without loss of generality. 
The function $j_0$ is of the form $j_0 = \langle q, \alpha q \rangle + \langle q, \beta p \rangle + \langle p , \gamma p \rangle$, 
with $\alpha = \gamma = 0$ and $\beta$ defined by 
\begin{equation*}
\beta_{i,j} \; = \; \frac{1}{2} \big( \delta_{0,0}(i,j) - \delta_{(1,1)}(i,j) + \delta_{0,1}(i,j) - \delta_{1,0}(i,j) \big).
\end{equation*}
Now, if $u$ is any function of the type $u = \langle q, \alpha q \rangle + \langle q, \beta p \rangle + \langle p , \gamma p \rangle$, 
then $\mathcal L u = \langle q, \alpha' q \rangle + \langle q, \beta' p \rangle + \langle p , \gamma' p \rangle$ with
\begin{equation*}
(\alpha',\beta',\gamma') 
\; = \;
\Big(
- \frac{\beta\Phi + \Phi\beta^\dagger}{2} , \alpha - 2 \Phi \gamma - 2 \lambda \beta, \frac{\beta + \beta^\dagger}{2} - 4 \lambda \tilde \gamma
\Big),
\end{equation*}
where $\tilde \gamma$ is such that $(\tilde \gamma)_{i,i} = 0$ and $(\tilde \gamma )_{i,j} = \gamma_{i,j}$ for $i \ne j$. 
Thus 
\begin{equation*}
\mathcal L^k j_0 \; = \; \langle q, \alpha_{(k)} q \rangle + \langle q, \beta_{(k)} p \rangle + \langle p , \gamma_{(k)} p \rangle, 
\end{equation*}
and there exists a constant $\mathrm C < + \infty$ such that $\zeta_{i,j} = 0$ whenever $|i| \ge \mathrm C k$ or $|j| \ge \mathrm C k$ 
and such that $|\zeta_{i,j}| \le \mathrm C^k$ otherwise, with $\zeta$ one of the three matrices $\alpha_{(k)}$, $\beta_{(k)}$ or $\gamma_{(k)}$. 
The claim is obtained by expressing $\| \mathcal L^k j_l \|_{\Lp^2 (\mu_T)} $ in terms of the matrices $\alpha_{(k)}$, $\beta_{(k)}$ and $\gamma_{(k)}$.
$\square$

\noindent
Explicit representation for the matrix $\Phi^{-1}$ in Lemma 1.1. in \cite{bry} allows to deduce the following lemma. 
\begin{Lemma}\label{lemm: approximation measure}
Let $f$ and $g$ be two polynomials of the type $\langle q, \alpha q \rangle + \langle p, \beta p \rangle + \langle p , \gamma p \rangle$, 
and assume that there exists $n\in \N$ such that $\alpha_{i,j} = \beta_{i,j} = \gamma_{i,j} = 0$ whenever $|i|> n$ or $|j|> n$. 
Then there exists $c >0$ such that, for fixed or periodic B.C.,
\begin{equation*}
\big|\mu_T^{(N)} ( f \multiplication  g^* ) - \mu_T^{(\infty)}( f \multiplication g^* )\big| \; = \; \mathcal O ( \ed^{-cN}) \quad \text{as} \quad N \rightarrow \infty.
\end{equation*}
\end{Lemma}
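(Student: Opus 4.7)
The plan is to exploit the Gaussian nature of $\mu_T$ (which holds because $\lambda' = 0$) in order to reduce the claim to a comparison of two-point correlations, and then invoke the explicit representation of $\Phi^{-1}$ furnished by Lemma 1.1 of \cite{bry}.

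First, when $\lambda' = 0$, $\mu_T$ is exactly Gaussian: the $p_k$'s are independent centered Gaussians of variance $T$ (identically so in finite and infinite volume), while the $q_k$'s form a centered Gaussian vector with covariance $T\, \Phi^{-1}$. The product $f \multiplication g^*$ is a polynomial of degree at most four in the variables $(q_k, p_k)_{|k| \le n}$, so by Wick's (Isserlis') theorem both $\mu_T^{(N)}(fg^*)$ and $\mu_T^{(\infty)}(fg^*)$ can be written as one and the same polynomial expression in the two-point functions of the corresponding measure, with coefficients depending only on $\alpha, \beta, \gamma$ and $T$. The momentum contributions $\mu_T(p_i p_j) = T\delta_{i,j}$ coincide in both measures, so the difference $\mu_T^{(N)}(fg^*) - \mu_T^{(\infty)}(fg^*)$ reduces to a finite sum (of $N$-independent length, since all supports lie in $\{-n, \ldots, n\}$) of terms involving differences of products of entries $(\Phi^{-1})_{i,j}$ with $|i|,|j| \le n$. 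Telescoping these differences of products and using the uniform bound $|(\Phi^{-1})_{i,j}| \le \mathrm C\, \ed^{-c|i-j|}$ already recalled from \cite{bry}, the lemma would be reduced to showing
\begin{equation*}
\big| (\Phi_N^{-1})_{i,j} - (\Phi_\infty^{-1})_{i,j} \big| \; = \; \mathcal O(\ed^{-cN}) \qquad \text{for} \qquad |i|,|j| \le n.
\end{equation*}

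For this remaining estimate I would invoke the explicit representation of $(\Phi^{-1})_{i,j}$ from Lemma 1.1 of \cite{bry}, which expresses it in terms of two solutions $\psi^\pm$ of the homogeneous Jacobi equation $(\Phi\psi)_k = 0$ adapted to each boundary, with Wronskian of order one. The assumption $\nu_k \ge \nu_- > 0$ ensures, via a transfer-matrix argument, that $\psi^+$ decays exponentially to the right and $\psi^-$ to the left at a uniform rate $c > 0$. In finite volume with fixed B.C., $\psi_N^\pm$ satisfy Dirichlet conditions at the endpoints and therefore differ from the infinite-volume decaying solutions $\psi_\infty^\pm$ by a reflected component whose amplitude within the central window $\{-n, \ldots, n\}$ is of order $\ed^{-cN}$; subtracting the two representations then yields the claimed bound. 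For periodic B.C.\ the same comparison is valid up to a wrap-around correction, itself controlled by $\ed^{-cN}$ through the same exponential decay of transfer-matrix products.

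The main obstacle is extracting the quantitative $\ed^{-cN}$ comparison of $\psi_N^\pm$ with $\psi_\infty^\pm$ from the transfer-matrix analysis uniformly in the disorder. This is precisely the step where the pinning hypothesis $\nu_- > 0$ is indispensable, since it keeps the relevant Lyapunov exponent of the transfer matrix product uniformly positive and prevents small denominators from spoiling the exponential rate.
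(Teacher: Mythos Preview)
Your proposal is correct and follows exactly the route the paper indicates. The paper's own proof is the single sentence preceding the lemma, ``Explicit representation for the matrix $\Phi^{-1}$ in Lemma 1.1 in \cite{bry} allows to deduce the following lemma'', and you have supplied precisely the natural details behind that sentence: the Wick reduction of the degree-four Gaussian moments to two-point functions, the observation that the momentum covariances coincide in finite and infinite volume, and the exponential comparison of $(\Phi_N^{-1})_{i,j}$ with $(\Phi_\infty^{-1})_{i,j}$ for indices in the fixed window via the explicit Green's-function representation from \cite{bry}.
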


Let then $\mathbb D = \{ z \in \CC : \Re z > 0 \}$. 
For every $z\in \mathbb D$, let $u_z$ be the unique solution to the {\comm{resolvent equation in ${\mc H}$}}
\begin{equation}\label{def of the function u z}
(z - \mathcal L) u_z \; = \; j_{0}.
\end{equation}
We know from Theorem 1 in \cite{ber}\footnote{
The model studied there is not exactly the same.
The proof of the properties we mention here can be however readily adapted.
}, and from its proof, that 
\begin{equation}
\label{eq:tibere}
\lim_{z \to 0} \ll u_z, j_{0} \gg \quad \text{exists and is finite}
\end{equation}
and that 
\begin{equation}
\label{eq:trajan}
\lim_{z\rightarrow 0} z \ll u_z, u_z \gg \; = \; 0.
\end{equation}
%
For $z \in \mathbb D$ and $N \ge 3$, let $u_{k,z,N}$ be the unique solution to the equation
\begin{equation}\label{def of the function u k z N}
(z- L )u_{k,z,N} \; = \; j_k.
\end{equation}
so that 
\begin{equation}\label{def of the function u z N}
u_{z,N} \; := \; \frac{1}{\sqrt N} \sum_k u_{k,z,N} \qquad \text{solves} \qquad (z - L) u_{z,N} = \mathcal J_N.
\end{equation}
\begin{Lemma}\label{lemm: almost sure convergence}
For fixed or free boundary conditions and for almost all realizations of the pinnings, 
\begin{align*}
\lim_{z\rightarrow 0}\lim_{N\rightarrow \infty} \mu_T \big( u_{z,N} \multiplication \mathcal J_N \big)
\; &= \;
\lim_{z\rightarrow 0}\lim_{N\rightarrow \infty} \Mean_\nu\mu_T \big( u_{z,N} \multiplication \mathcal J_N \big)
\; = \;
\lim_{z \to 0} \ll u_z, j_{0} \gg , \\
\lim_{z\rightarrow 0}\lim_{N\rightarrow \infty} z\, \mu_T \big( u_{z,N} \multiplication u_{z,N} \big)
\; &= \; 
\lim_{z\rightarrow 0}\lim_{N\rightarrow \infty} z\, \Mean_\nu\mu_T \big( u_{z,N} \multiplication u_{z,N} \big)
\; = \; 0 .
\end{align*}
\end{Lemma}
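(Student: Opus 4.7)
The plan is to reduce each $N\to\infty$ limit at fixed $z>0$ to the corresponding infinite-volume quantity $\ll u_z, j_0\gg$ or $\ll u_z, u_z\gg$, and then to conclude via \eqref{eq:tibere}-\eqref{eq:trajan}. By \eqref{def of the function u z N} and the definition of $\mathcal J_N$,
\begin{equation*}
\mu_T\bigl(u_{z,N}\multiplication \mathcal J_N\bigr) \; = \; \frac{1}{N}\sum_{k,l=1}^N \mu_T^{(N)}(u_{k,z,N}\multiplication j_l),
\qquad
z\,\mu_T\bigl(u_{z,N}^2\bigr) \; = \; \frac{z}{N}\sum_{k,l=1}^N \mu_T^{(N)}(u_{k,z,N}\multiplication u_{l,z,N}).
\end{equation*}
Since $L$ preserves the space of quadratic functions in $(q,p)$ and each $j_k$ is quadratic, $u_{k,z,N}$ is itself quadratic; its coefficient matrices solve a linear system whose resolvent decays exponentially off the diagonal, uniformly in $N$ at any fixed $z>0$. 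This is obtained by adapting the Kunz-Souillard estimates of Subsection \ref{subsec: exponential bounds} (which control the matrices $\gamma(l)$ coming from $A_{har}$) and combining them with a Combes-Thomas type argument absorbing the perturbation $\lambda S$. The outcome is that, for $z>0$ fixed and $k, l$ in the bulk,
\begin{equation*}
\bigl|\mu_T^{(N)}(u_{k,z,N}\multiplication j_l)\bigr| \; \le \; C(z)\,\ed^{-c(z)|k-l|},
\qquad
\bigl|\mu_T^{(N)}(u_{k,z,N}\multiplication u_{l,z,N})\bigr| \; \le \; C(z)\,\ed^{-c(z)|k-l|},
\end{equation*}
uniformly in $N$, together with similar bounds in the distance to the boundary.

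Lemma \ref{lemm: approximation measure} then permits replacing $\mu_T^{(N)}$ by $\mu_T^{(\infty)}$ with an exponentially small error on quadratic functions of $N$-independent support. Letting $u_{k,z}$ denote the solution of $(z-\mathcal L)u_{k,z} = j_k$ in the infinite system, the translation covariance $u_{k,z}(x,\nu) = u_z(\tau_k x, \tau_k\nu)$ combined with the decay estimates above gives
\begin{equation*}
\frac{1}{N}\sum_{k,l=1}^N \mu_T^{(N)}(u_{k,z,N}\multiplication j_l) \; = \; \frac{1}{N}\sum_{k=1}^N F_z(\tau_k\nu) \; + \; o(1),
\quad
F_z(\nu) \; := \; \sum_{m\in\ZZ}\mu_T^{(\infty)}\bigl(u_z\multiplication \tau_m j_0\bigr)(\nu),
\end{equation*}
and an analogous identity for the $u\multiplication u$ sum with $G_z(\nu) := \sum_{m\in\ZZ}\mu_T^{(\infty)}(u_z\multiplication \tau_m u_z)(\nu)$ in place of $F_z$. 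Since the pinnings are i.i.d. the shift acts ergodically, so Birkhoff's theorem yields $\frac{1}{N}\sum_k F_z(\tau_k\nu) \to \Mean_\nu F_z = \ll u_z, j_0\gg$ and $\frac{1}{N}\sum_k G_z(\tau_k\nu) \to \ll u_z, u_z\gg$ for almost every realization of the pinnings. This gives both the quenched and the annealed convergence at fixed $z>0$; invoking \eqref{eq:tibere} and \eqref{eq:trajan} in the final $z\to 0$ step concludes the proof.

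The main technical obstacle is the uniform (in $N$) exponential localization of the quadratic form $u_{k,z,N}$ at fixed $z>0$. Unlike in Section \ref{sec: Poisson equation}, the generator $L = A_{har} + \lambda S$ is no longer block-diagonalized by the disordered eigenbasis $(\xi^k)$, so one must invert a coupled matrix equation combining the random potential $\Phi$ and the deterministic noise generator. A Combes-Thomas argument, fed by the Kunz-Souillard localization of Subsection \ref{subsec: exponential bounds}, should deliver the required bound; the behaviour of the constants $C(z), c(z)$ as $z\to 0$ must be tracked, but since the $N\to\infty$ limit is taken first, it suffices to work at each fixed $z>0$ separately.
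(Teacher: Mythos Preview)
Your overall architecture is correct: reduce the finite-volume averages to ergodic averages of an infinite-volume local quantity, apply Birkhoff, and finish with \eqref{eq:tibere}--\eqref{eq:trajan}. The paper does exactly this. The difference lies in how the finite-to-infinite volume comparison is carried out at fixed $z$, and this is where your argument has a real gap.

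You need, for each fixed $z>0$, uniform-in-$N$ exponential decay of the coefficients of the quadratic form $u_{k,z,N}$ away from $k$. You propose to obtain this from a Combes--Thomas argument combined with the Kunz--Souillard input of Subsection~\ref{subsec: exponential bounds}, but you do not actually carry this out, and you close by saying it ``should deliver the required bound''. This is precisely the hard point. The operator $z-L$ acts on triples of matrices $(\alpha,\beta,\gamma)$ through a coupled system mixing $\Phi$ and the noise; it is not a standard Schr\"odinger resolvent, and neither Combes--Thomas nor Kunz--Souillard applies to it off the shelf. For small $z$ in particular, there is no obvious spectral gap in this matrix system to feed into a Combes--Thomas estimate, so you would essentially be reproving a localization statement for a new random operator. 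The bounds of Subsection~\ref{subsec: exponential bounds} concern the $z=0$, $\lambda=0$ problem and do not transfer automatically.

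The paper sidesteps this difficulty entirely. In a first step it takes $|z|$ large enough that the Neumann series $u_z=\sum_{k\ge 0} z^{-(k+1)}\mathcal L^k j_0$ and its finite-volume analogue converge; Lemma~\ref{lemm: powers of L are bounded} gives $\|\mathcal L^k j_0\|_{\Lp^2}\le \mathrm C^k$ with the support of $\mathcal L^k j_0$ growing only linearly in $k$, so truncating the series, applying Lemma~\ref{lemm: approximation measure}, and invoking the ergodic theorem handles this regime with no localization input beyond finite-range propagation. In a second step, the functions $z\mapsto \mu_T^{(N)}(u_{z,N}\multiplication\mathcal J_N)$ are analytic and uniformly bounded on $\{\Re z>0\}$, so Montel's theorem and the identity principle propagate the convergence from large real $z$ to all of the half-plane. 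This complex-analytic continuation is the idea that replaces your unproven decay estimate; it is both simpler and complete.
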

\Proof 
By \eqref{eq:tibere} and \eqref{eq:trajan},
it suffices to establish separately that, for every $z \in \mathbb D$, and for almost every realization of the pinnings,
\begin{align*}
&\lim_{N\rightarrow \infty} \mu_T \big( u_{z,N} \multiplication \mathcal J_N \big) \; = \; \ll u_z, j_{0} \gg ,
\qquad
\lim_{N\rightarrow \infty} \Mean_\nu\mu_T \big( u_{z,N} \multiplication \mathcal J_N \big) \; = \; \ll u_z, j_{0} \gg ,\\
&\lim_{N\rightarrow \infty} z\, \mu_T \big( u_{z,N} \multiplication u_{z,N} \big) \; = \; 
z \ll u_z,u_z \gg, 
\qquad
\lim_{N\rightarrow \infty} z\, \Mean_\nu\mu_T \big( u_{z,N} \multiplication u_{z,N} \big) \; = \; z \ll u_z,u_z \gg.
\end{align*}
The proof of these four relations is in fact very similar, and we will focus on the first one. 
We proceed in two steps: we first show the result for $|z|$ large enough, and then extend it to all $z\in \mathbb D$. 

\textbf{First step.}
Here we fix $z \in \mathbb D$ with $|z|$ large enough. 
We first assume periodic boundary conditions.
The function $u_z$ solving \eqref{def of the function u z} may be given by 
\begin{equation*}
u_z \; = \; \sum_{k \ge 0} z^{-(k+1)} \mathcal L^k j_0,
\end{equation*}
this series converging in virtu of Lemma \ref{lemm: powers of L are bounded} for $|z|$ large enough. 
Let now $n \ge 1$.
We compute
\begin{equation}\label{une equation}
\ll j_0, u_z \gg \; = \; \sum_{k = 0}^n z^{-(k+1)} \sum_{l \in \Z} \Mean_\nu\mu_T^{(\infty)} ( j_l \multiplication L^k j_0 )
\, + \, 
\sum_{k = n+1}^\infty z^{-(k+1)} \sum_{l \in \Z} \Mean_\nu \mu_T^{(\infty)} ( j_l \multiplication L^k j_0 ).
\end{equation}
For every given $k$, the sum over $l$ is actually a sum over $\mathrm C\, k$ non-zero terms only, for some $\mathrm C < + \infty$. 
From this fact and from Lemma \ref{lemm: powers of L are bounded}, 
it is concluded that the second sum in the right hand side of \eqref{une equation} converges to 0 as $n \rightarrow \infty$. 
Similarly we write
\begin{equation}\label{une autre equation}
\mu_T^{(N)} ( \mathcal J_N \multiplication u_{z,N} ) \; = \;
\frac{1}{N} \sum_{s,t} \sum_{k=0}^n z^{-(k+1)} \mu_T^{(N)} ( j_s \multiplication L^k j_t )
\, + \, 
\frac{1}{N} \sum_{s,t} \sum_{k=0}^n z^{-(k+1)} \mu_T^{(N)} ( j_s \multiplication L^k j_t ).
\end{equation} 
Here as well, the second term in \eqref{une autre equation} is such that 
\begin{equation*}
\lim_{n\rightarrow \infty} \limsup_{N \rightarrow \infty} \frac{1}{N} \sum_{s,t} \sum_{k=0}^n z^{-(k+1)} \mu_T^{(N)} ( j_s \multiplication L^k j_t ) \; = \; 0.
\end{equation*}
To handle the first term in \eqref{une autre equation}, let us write 
\begin{equation*}
F_n (\nu) \; = \; \sum_{t \in \Z_N} \sum_{k=0}^n z^{-(k+1)} \mu_T^{(N)} ( j_0 \multiplication L^k j_t ). 
\end{equation*}
Then in fact
\begin{equation*}
\frac{1}{N} \sum_{s,t} \sum_{k=0}^n z^{-(k+1)} \mu_T^{(N)} ( j_s \multiplication L^k j_t )
\; = \; 
\frac{1}{N} \sum_{s} F_n (\tau_s \nu).
\end{equation*}
The result is obtained by letting $N \rightarrow \infty$, invoking Lemma \ref{lemm: approximation measure} and the ergodic theorem, 
and then letting $n \rightarrow \infty$. 
If we had started with fixed boundary conditions, then, for every fixed $n$, 
all the previous formulas remain valid up to some border terms that vanish in the limit $N \rightarrow \infty$ due to the factor $1/N$.

\textbf{Second step.}
Denote by ${\mf L}_{N,\nu} (z)$ and ${\mf L} (z)$ the complex functions defined on ${\bb D}$ by
\begin{equation*}
{\mf L}_{N,\nu} (z) \; = \; \mu_T^{(N)} ( u_{z,N} \multiplication \mathcal J_N )
\quad \text{and} \quad 
{\mf L} (z) \; = \; \ll u_z, j_{0}\gg.
\end{equation*} 
The first observation is that these functions are well defined and analytic on ${\bb D}$. 
Moreover, similarly to what is proved in \cite{ber}, they are uniformly bounded on ${\bb D}$ by a constant independent of $N$ and the realization of the pinning $\nu$.

Let us fix a realization of the pinnings. The family $\{ \mf L_{N,\nu} \; ;\; N \ge 1\}$ is a normal family and 
by Montel's Theorem we can extract a subsequence $\{\mf L_{N_{k}, \nu}\}_{k \ge 1}$ 
such that it converges (uniformly on every compact set of ${\bb D}$) to an analytic function $f^{\star}_{\nu}$. 

By the first step we know that $f_{\nu}^{\star} (z) ={\mf L} (z)$ for any real $z>z_0$. 
Thus, since the functions involved are analytic, $f^{\star}_{\nu}$ coincides with ${\mf L} $ on ${\bb D}$. 
It follows that the sequence $\{{\mf L}_{N,\nu} (z)\}_{N \ge 1}$ converges for any $z \in \bb D$ to ${\mf L} (z)$.
$\square$

Following a classical argument, we can now proceed to the

\ProofOf{of \eqref{convergence of Green Kubo}}
For any $z > 0$, it holds that
\begin{align*}
\frac{1}{\sqrt{t}} \int_0^t \mathcal J_N \circ X^s \, \dd s
\; = & \; 
-\frac{1}{\sqrt t} \int_0^t L u_{z,N} \circ X^s \, \dd s + 
\frac{z}{\sqrt t} \int_0^t u_{z,N} \circ X^s \, \dd s \\
\; = & \;
\frac{1}{\sqrt t} \mathcal M_{z,N,t} 
- 
\frac{u_{z,N}\circ X^t - u_{z,N}}{\sqrt t} 
+ 
\frac{z}{\sqrt t} \int_0^t u_{z,N} \circ X^s \, \dd s
\end{align*}
Here $\mathcal M_{z,N,t}$ is a stationary martingale {\comm{with variance given by 
\begin{equation}\label{formula for the convergence of GK abc}
\mu_T\Mean \big( \mathcal M_{z,N,t}^2 \big)
\; = \; 
\mu_T \big( u_{z,N} \multiplication (z - L) u_{z,N} \big) - z\, \mu_T (u_{z,N} \multiplication u_{z,N}).
\end{equation}
Here the equality $\mu_T \big( u_{z,N} \multiplication A_{har} u_{z,N} \big) = 0$ has been used.}}
Next
\begin{equation*}
\mu_T\Mean \left(
\frac{u_{z,N} \circ X^{t} - u_{z,N}}{\sqrt t}
\right)^2
\; \le \; 
\frac{2}{t} \, \mu_T (u_{z,N} \multiplication u_{z,N})
\end{equation*}
and
\begin{equation*}
\mu_T\Mean \left(
\frac{z}{\sqrt t} \int_0^t u_{z,N} \circ X^s \, \dd s
\right)^2\; \le \;
z^2 t \, \mu_T (u_{z,N} \multiplication u_{z,N}).
\end{equation*}
Reminding that $\mu_T \big( u_{z,N} \multiplication (z - L) u_{z,N} \big) = \mu_T \big( u_{z,N} \multiplication \mathcal J_N \big)$, 
the proof is completed by taking $z = 1/t$ and invoking Lemma \ref{lemm: almost sure convergence}. $\square$
\section{Lower bound in the absence of anharmonicity}\label{sec: lower bound}
We here establish the lower bound in \eqref{upper and lower bound on Green Kubo}, and so we assume $\lambda >0$ and $\lambda' = 0$. 
We also assume periodic boundary conditions.
We use the same method as in \cite{ber} (see also \cite{dha}). 
According to Section \ref{sec: convergence}, it is enough to establish that there exists a constant $c>0$ such that, 
for almost every realization of the pinnings, for every $z >0$ and for every $N \ge 3$, 
\begin{equation}\label{equivalent to lower bound}
\mu_T \big( \mathcal J_N \multiplication ( z - L)^{-1} \mathcal J_N \big) \; \ge \; c.
\end{equation} 
Indeed, by \eqref{def of the function u z N}, 
$\mu_T \big( \mathcal J_N \multiplication ( z - L)^{-1} \mathcal J_N \big) = \mu_T \big( u_{z,N} \multiplication ( z - L) u_{z,N} \big)$, 
and, by \eqref{formula for the convergence of GK abc}, this quantity converges to the right hand side of \eqref{convergence of Green Kubo}.

\ProofOf{of \eqref{equivalent to lower bound}}
For periodic B.C., the total current $J_N$ is given by
\begin{equation*}
J_N = \cfrac{1}{2} \sum_{k \in \Z_N} (q_{k} p_{k+1} - q_{k+1} p_k).
\end{equation*} 
To get a lower bound on the conductivity, we use the following variational formula
\begin{equation}
\label{eq:vf0}
\mu_T \left( J_N \, (z-L)^{-1} \, J_N \right) \; = \; \sup_{f} \left\{ 2 \mu_T (J_N \multiplication f) - 
\mu_T \left( f \multiplication (z-\lambda S) f\right) - \mu_T \left( A_{har} f \multiplication (z-\lambda S)^{-1} \, A_{har} f \right) \right\}
\end{equation} 
where the supremum is carried over the test functions $f\in \mathcal C^{\infty}_{temp}(\R^{2N})$.
See \cite{set} for a proof. 
We take $f$ in the form
\begin{equation*}
f \; = \; a \langle q, \beta p \rangle \quad \text{with} \quad a \in \R \quad \text{and} \quad \beta \;=\; \Phi M,
\end{equation*} 
where $M$ is the antisymmetric matrix such that $M_{i,j}= \delta_{i,j-1} - \delta_{i,j+1}$,
with the convention of periodic B.C.: $\delta_{1,N+1} = \delta_{1,1}$ and $\delta_{N,0} = \delta_{N,N}$. 

First, we have
\begin{equation*}
A_{har} f 
\; = \; 
a\langle p, \beta p \rangle - a\langle q, \beta \Phi q \rangle 
\; = \; 
a \sum_{i\ne j} \beta_{i,j} p_{i} p_j
\end{equation*} 
since $\beta \Phi = \Phi M \Phi$ is antisymmetric, 
and since $\beta_{i,i} = 0$ for $1\le i \le N$. 
Since $S(p_i p_j)= -4 p_i p_j$ for $i \ne j$, we obtain
\begin{align}
\mu_T \left( Af \multiplication (z-\lambda S)^{-1} \, Af \right) 
\; = &\; 
\frac{1}{z + 4 \lambda} \mu_T (A_{har}f \multiplication A_{har}f)
\; = \; 
\frac{a^2 T^2}{z + 4\lambda} \sum_{i\ne j} \big( \beta_{i,j}^2 + \beta_{i,j}\beta_{j,i} \big)
\nonumber\\
\; \le &\; 
\mathrm C \,\frac{ a^2 T^2 N}{z + 4\lambda}
\label{eq:Aterm}
\end{align}
for some constant $\mathrm C < + \infty$.
Next, 
since $S p_k = -2 p_k$ for $ \le k \le N$, there exists some constant $\mathrm C < + \infty$ such that
\begin{align}
\mu_T ( f \multiplication (z-\lambda S) f ) 
\; =&\; a^2 T^2 (z+2\lambda) \sum_{i,j} \beta_{i,j} (\Phi^{-1} \beta)_{i,j} = a^2 T^2 (z+2\lambda) {\rm{Tr}} \left[ \beta^{\dagger} \Phi^{-1} \beta \right]
\nonumber \\
\; \le &\; \mathrm C a^2 T^2 (z+2\lambda) N.
\label{eq:Sterm}
\end{align}
Let us finally estimate the term $\mu_T (J_N \, f)$: 
\begin{align}
\mu_T (J_N \, f) \; = &\;
\frac{a}{2} \mu_T\Big(
\sum_{i,j} \beta_{i,j} q_i p_j \multiplication \sum_{k\in \Z_N} (q_k p_{k+1} - q_{k+1}p_k)
\Big)
\; = \;
\frac{aT^2}{2} \sum_{i,k} \big( \beta_{i,k+1} \mu_T ( q_i q_k ) - \beta_{i,k} \mu_T ( q_i q_{k+1} ) \big) 
\nonumber\\
= &\;
\frac{a T^2}{2} \sum_{k \in \Z_N} \big( (\beta^\dagger \Phi^{-1})_{k+1,k} - (\beta^\dagger \Phi^{-1})_{k+1,k} \big)
\; = \;
\frac{a T^2}{2} \sum_{k \in \Z_N} \big( M_{k,k+1} - M_{k+1,k} \big) 
\nonumber\\
= &\;
a T^2 N.
\label{eq:Jterm}
\end{align}

By \eqref{eq:Aterm}, \eqref{eq:Sterm}, \eqref{eq:Jterm} and the variational formula \eqref{eq:vf0}, 
we find that there exists a constant $\mathrm C < + \infty$, 
independent of the realization of the disorder, of $\lambda$ and of $N$, such that for any positive $a$,
\begin{equation*}
\frac{1}{N T^2} \mu_T (J_N \, (z- L)^{-1} \, J_N) \; \ge \; a - \mathrm C {a^2}\left( (z+2\lambda)+ \cfrac{1}{z+ 4 \lambda} \right).
\end{equation*}
By optimizing over $a$, this implies
\begin{equation*}
\frac{1}{N T^2} \mu_T (J_N \, (z- L)^{-1} \, J_N) \; \ge \; \frac{1}{4 \mathrm C} \left( (z+2\lambda)+ \cfrac{1}{z+ 4 \lambda}\right)^{-1} .
\end{equation*}
Since $\mathcal J_N = J_N/\sqrt N$, this shows (6.1). $\square$

\textbf{Acknowledgements.}
We thank J.-L. Lebowitz and J. Lukkarinen for their interest in this work and T. Bodineau for useful discussions. 
We thank C. Liverani, S. Olla and L.-S. Young, 
as organizers of the Workshop on the Fourier Law and Related Topics, 
as well as the kind hospitality of the Fields Institute in Toronto, 
where this work was initiated.
C.B. acknowledges the support of the French Ministry of 
Education through the grants ANR-10-BLAN 0108.
F.H. acknowledges the European Advanced Grant Macroscopic Laws and Dynamical Systems (MALADY) (ERC AdG 246953) for financial support.

\end{document}